\newtheorem{theorem}{Theorem}
\newcommand{\mB}[1]{\ensuremath{m_{B,#1}}}
\newcommand{\mb}[1]{\ensuremath{m_{b,#1}}}
\newcommand{\thetav}{\ensuremath{\boldsymbol{\theta}}}
\newcommand{\thetaside}{\ensuremath{\hat{\thetav}_\mathsf{S}}}
\newcommand{\thetacentral}{\ensuremath{\hat{\thetav}_\mathsf{C}}}
\newcommand{\x}{\ensuremath{\mathbf{x}}}
\newcommand{\y}{\ensuremath{\mathbf{y}}}
\newcommand{\Ri}{\ensuremath{\mathbb{R}}}
\newcommand{\indfun}{\mathcal{I}}
\begin{document}

\title{Graded quantization for multiple description coding of compressive measurements}
\author{Diego Valsesia,~\IEEEmembership{Student Member,~IEEE,} Giulio Coluccia,~\IEEEmembership{Member,~IEEE,} and\\ Enrico Magli,~\IEEEmembership{Senior Member,~IEEE}
\thanks{The authors are with the Department of Electronics and Telecommunications of Politecnico di Torino. Their work has received funding from the European Research Council under the European Community's Seventh Framework Programme (FP7/2007-2013) / ERC Grant agreement number 279848.}}

\markboth{IEEE Transactions on Communications}%
{Submitted paper}

\maketitle

\begin{abstract}

Compressed sensing (CS) is an emerging paradigm for acquisition of compressed representations of a sparse signal. Its low complexity is appealing for resource-constrained scenarios like sensor networks. However, such scenarios are often coupled with unreliable communication channels and providing robust transmission of the acquired data to a receiver is an issue. Multiple description coding (MDC) effectively combats channel losses for systems without feedback, thus raising the interest in developing MDC methods explicitly designed for the CS framework, and exploiting its properties. We propose a method called Graded Quantization (CS-GQ) that leverages the democratic property of compressive measurements to effectively implement MDC, and we provide methods to optimize its performance. A novel decoding algorithm based on the alternating directions method of multipliers is derived to reconstruct signals from a limited number of received descriptions. Simulations are performed to assess the performance of CS-GQ against other methods in presence of packet losses. The proposed method is successful at providing robust coding of CS measurements and outperforms other schemes for the considered test metrics.    
\end{abstract}

\IEEEpeerreviewmaketitle

\section{Introduction}
Compressed sensing (CS) \cite{CS_donoho} is a novel theory for sampling and acquisition of sparse and compressible signals. The traditional paradigm based on sampling a signal according to the Nyquist/Shannon theorem followed by compression can be replaced by the acquisition of a small number of linear measurements, in the form of random projections. This is very appealing for low-complexity systems, such as low-power sensor motes, where classic acquisition followed by compression could be expensive in terms of energy consumption and computational demands. Such systems typically need to transmit the acquired data to a receiver over unreliable channels, thus raising the issue of robustness of the transmission against channel losses and at the same time imposing a constraint on the complexity of the adopted methods. The framework of multiple description coding (MDC) allows to increase the robustness by creating multiple correlated representations of the data to be transmitted. The quality of the decoded data will then depend on the number of descriptions available at the receiver, where \emph{side decoders} can recover a lower quality version of the data from few received descriptions, whereas the \emph{central decoder} can achieve the maximum quality when all the descriptions are received. A few works studied the problem of MDC in the CS framework. An early work on the topic \cite{CSApproachFrameMDC} borrows from the theory of sparse representations to derive a two-description coding scheme through a frame synthesis operator. However, the work does not consider random projections. The work in \cite{CSMDC} presents a method to generate descriptions of an image by sensing each of them separately, \emph{i.e.}, preprocessing the image to split it into two sub-images and then sense the wavelet coefficients of each separately. We will not consider this kind of approach in the paper because we want to explicitly avoid any preprocessing of the signal before acquisition. Indeed, compressive measurements could be directly obtained by specialized hardware (\emph{e.g.}, \cite{SPCamera, chan2008terahertz, tomasparallel}), thus hindering any preprocessing of the data. Deng \emph{et al.}, \cite{RobustImageCodingCS} argue that the democratic property of random projections makes compressive sensing image coding robust to channel losses. They just partition the measurements into packets so that the quality of the decoded image will depend on the number of received packets. However, they do not consider a practical packetization problem, as they employ very small packets containing few measurements for each transmission. Such system is actually not sensible due to the high packetization overhead. 

This paper builds on the MDC mechanism called graded quantization (CS-GQ), originally proposed in \cite{ValsesiaICASSP},  exploiting the democratic property of compressive measurements. The principle behind CS-GQ is that multiple copies of the measurement vector can be used as descriptions. Each description is then partitioned into sets of samples, and each set is quantized with different quality. The principle is similar to \cite{subbalakshmi2002multiple, jiang1998multiple, mdsq, samarawickrama2010channel, tillo2010multiple, dumitrescu2010unequal, akyol2007flexible, baccaglini2007flexible, tillo2007multiple, bajic2003domain, mdsq2011}, but has never been applied in conjunction with CS. 

In this paper, we give several novel contributions with respect to \cite{ValsesiaICASSP}. First, we discuss how methods based on segmentation of measurements such as \cite{RobustImageCodingCS} are special cases of CS-GQ and, contrary to the present literature on MDC for CS, we carefully consider packetization issues. The use of CS for low-complexity and low-energy sensor networks motivates us to pay particular attention to the use of the proposed method with common data link layer protocols such as IEEE 802.15.4.  Moreover, we propose techniques to improve the performance of CS-GQ, to optimize its parameters, and we introduce a novel algorithm based on the alternating directions method of multipliers (ADMM) \cite{boydadmm} to implement the side decoder. On the theoretical side, perfomance analysis is conducted by providing reconstruction guarantees for the side decoder, using both the actual decoder and an ideal version based on the oracle decoder, which serves as a limit case for performance assessment. On the experimental side, simulations are performed for various scenarios. The paper is organized as follows: Sec. \ref{sec:background} explains some introductory background concepts, Sec. \ref{cs_gq} introduces CS-GQ and Sec. \ref{admm} the side decoding algorithm based on ADMM. Sec. \ref{sec:theory} presents theorems guaranteeing stable side decoding and discusses the performance of oracle-based side and central decoders. Sec. \ref{sec:rd_opt} deals with the optimization of the CS-GQ parameters as function of the channel loss probability and discusses packetization issues. Finally, Sec. \ref{numerical} reports the results of simulations comparing various MDC methods, the performance of CS-GQ in presence of channel losses using two different channel models and shows that the robustness of multiple descriptions can improve scene recognition accuracy in a sensor network application.

\section{Background}
\label{sec:background}

\subsection{Compressed sensing}

In the standard CS framework, introduced in \cite{CS_donoho,candes2006nos}, a signal $\x\in\Ri^{n\times 1}$ which has  a sparse representation in some basis $\Psi\in\Ri^{n\times n}$, \textit{i.e.} $\x = \Psi \thetav,\quad \Vert\thetav\Vert_0 = k,\quad k\ll n $ can be recovered from a smaller vector of noisy linear measurements $\y = \Phi\x + \boldsymbol{e}$, $\y\in\Ri^{m\times 1}$ and $k<m<n$,  where $\Phi\in\Ri^{m\times n}$ is the \emph{sensing matrix} and $\boldsymbol{e}\in\Ri^{m\times 1}$ is the vector representing additive noise such that $\Vert\boldsymbol{e}\Vert_2 < \varepsilon$, using \cite{bpdn}

\begin{equation}\label{eq:CS_recovery_relaxed}
\hat{\thetav}=\arg\min_{\thetav}\Vert\thetav\Vert_1\ \quad \text{s.t.}\quad \Vert\Phi\Psi\thetav - \y\Vert_2 < \varepsilon~
\end{equation}
and $\widehat{\x}=\Psi\widehat{\boldsymbol{\theta}}$, known as Basis Pursuit DeNoising, provided that $m = O(k\log(n/k))$ \cite{candes2006nos} and that each submatrix consisting of $k$ columns of $\Phi\Psi$  is (almost) distance preserving \cite[Definition 1.3]{davenport2012introduction}. The latter condition is the \emph{Restricted Isometry Property} (RIP). Formally, the matrix $\Phi\Psi$ satisfies the RIP of order $k$ if $\exists \delta_k \in (0,1]$ such that, for any $\thetav$ with $\Vert\thetav\Vert_0 \le k$:
$$
(1-\delta_k)\Vert\thetav\Vert_2^2\le\Vert\Phi\Psi\thetav\Vert_2^2\le(1+\delta_k)\Vert\thetav\Vert_2^2,
$$
where $\delta_k$ is the RIP constant of order $k$. It has been shown in \cite{baraniuk2008spr} that when $\Phi$ is an i.i.d. random matrix drawn from any subgaussian distribution and $\Psi$ is an orthogonal matrix, $\Phi\Psi$ satisfies the RIP with overwhelming probability. Moreover, using a random matrix as sensing matrix, ensures the \emph{democratic property} of compressive measurements $\y$ \cite{Democracy_Laska}. Indeed, it can be seen that each measurement has roughly the same importance as the others, in the sense that none of them improves or degrades the quality of the reconstruction significantly more than the others. This property is the key ingredient of our proposed CS-GQ technique, as described in Section~\ref{cs_gq}.

\subsection{Multiple description coding}
\label{mdc}
MDC \cite{mdcMagazine} is a way of coding an information source that is resilient to packet losses. The multiple description technique allows to create multiple correlated representations of the original information source, each carrying enough information to decode the source with a certain fidelity. Losing a description will not make the received data unusable since each description can be decoded separately, albeit providing a limited quality. However, the best decoding quality is obtained when all the descriptions are available. In the framework of MDC, one can identify the so-called \emph{central} decoder and the \emph{side} decoders. The role of the central decoder is to decode the source when all descriptions have been received, thus achieving the so-called central distortion. The side decoder recovers the source with a lower fidelity (side distortion) since only a subset of the descriptions has been received. This scheme is depicted in Fig.~\ref{fig:mdc} for the simple two-description case. One may seek to create balanced or unbalanced descriptions depending on their individual contribution to the final quality of the recovered source. Several approaches to MDC have been studied in the literature. Among others, it is possible to identify approaches based on transforms, such as the Pairwise Correlating Transform \cite{PCT_Wang}, approaches based on channel coding, such as Unequal Error Protection \cite{UEP_Mohr} and approaches based on quantization, notably the Multiple Description Scalar Quantizer (MDSQ) \cite{mdsq}.
\begin{figure}
\centerline{\includegraphics[width=.9\columnwidth]{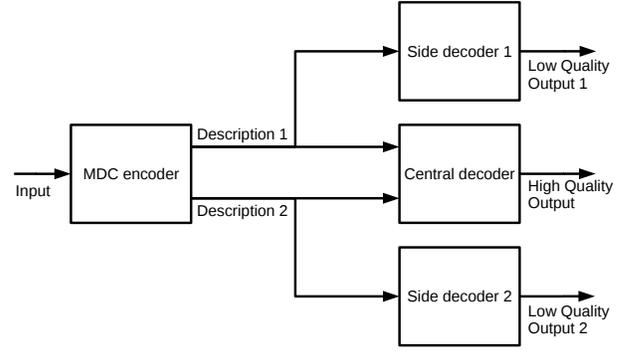}}
\caption{Block diagram of a two-description MDC system.}
\label{fig:mdc}
\end{figure}

\section{Graded quantization formulation}
\label{cs_gq}

\begin{figure}
\centerline{\includegraphics[width=.9\columnwidth]{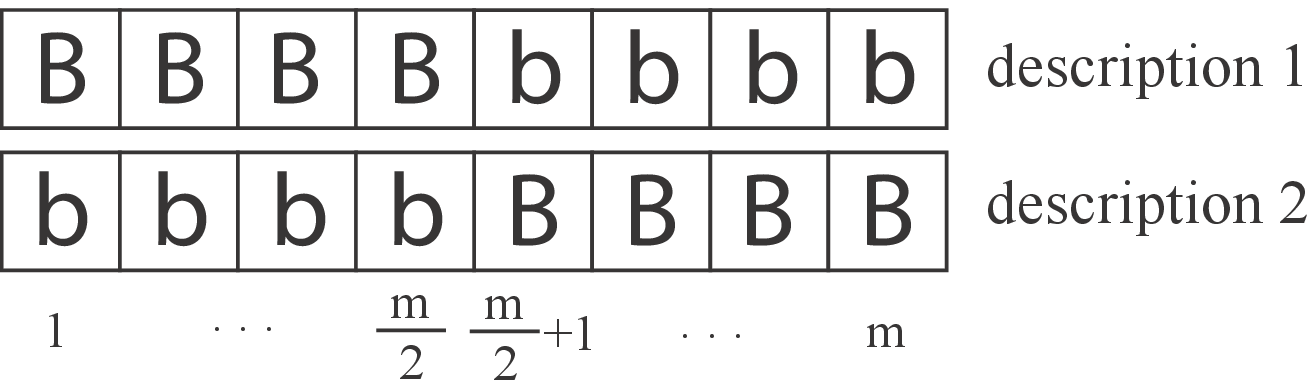}}
\caption{CS-GQ}
\label{GQ_basics}
\end{figure}

In the simple case of two descriptions, CS-GQ partitions the indices of the measurement vector $\mathbf{y}$ into two sets $\Omega_1$ and $\Omega_2$. Two different quantization step sizes (coarse and fine) are chosen and assigned to each set to generate the first description. The dual assignment is performed for the second description. This is graphically shown in Fig.~\ref{GQ_basics}, where we considered $\Omega_1 = \left\lbrace i \in \left[1,\frac{m}{2} \right] \right\rbrace$ and $\Omega_2 = \left\lbrace j \in \left[\frac{m}{2}+1,m \right] \right\rbrace$ but, thanks to the democratic property \cite{Democracy_Laska}, the same performance is expected for any other definition of the sets with the same cardinality, \emph{e.g.}, the sets of even and odd indices. Before going into further details, we remark that this scheme is amenable to generalizations to an arbitrary number of sets and quantization step sizes and hence of descriptions. In this paper we will not consider this more general case, and we will rather focus on the analysis and usage of the two-descriptions system. The proposed system design may provide either balanced or unbalanced descriptions, depending on how many measurements are quantized using the high-rate and the low-rate quantizers. In case of unbalanced descriptions, the cardinality of sets $\Omega_1$ and $\Omega_2$ is not the same. Balanced descriptions offer the same side decoding distortion regardless of which description is received. Otherwise, unbalanced descriptions may have significantly different side distortion depending on the specific description that is received. This is desirable when the descriptions are sent on separate channels having very different packet loss probabilities. In the simplest case, one description is contained in a single packet, \emph{e.g.}, when the number of measurements is small or there is no packet-size limitation. Issues related to packetization strategies and channel losses will be discussed later in Section~\ref{sec:rd_opt}.

From now on, we consider staggered uniform scalar quantizers with $2^B$ and $2^b$ levels, and $B \geq b$. The associated quantization step sizes are, respectively, $\Delta_B$ and $\Delta_b$, which are linked to the number of levels through the dynamic range $r$: $\Delta_B = r2^{-B}$ and $\Delta_b = r2^{-b}$. The $i$-th description, $i=1,2$, has then \mB{i}\ elements quantized with $2^B$ levels and \mb{i}\ elements quantized with $2^b$ levels, such that $\mB{i}+\mb{i}=m$.

Staggered quantizers are quantizers whose reconstruction levels are shifted with respect to each other. In this work, the staggering involves the low-resolution and the high-resolution quantizers and is motivated by an improvement in the resolution obtained by the central decoder, which receives both the high-resolution and low-resolution versions of each measurement. In fact, when both descriptions are received, each measurement falls inside the intersection of the quantization intervals defined by the two staggered quantizers, which is smaller or equal than $\Delta_B$. Hence, the low-resolution quantizer has its bins shifted by $\frac{\Delta_B}{2}$. The following equations describe how the measurement vector $\y$ is quantized to generate the 2 descriptions:

\begin{align*}
\mathbf{y}^{(B)} &= \left\lfloor \frac{\mathbf{y}}{\Delta_B} \right\rfloor \Delta_B + \frac{\Delta_B}{2}\\
\mathbf{y}^{(b)} &= \left\lfloor \frac{\mathbf{y}-\frac{\Delta_B}{2}}{\Delta_b} \right\rfloor \Delta_b + \frac{\Delta_b}{2} + \frac{\Delta_B}{2}~.
\end{align*}

This allows to gain some extra bins of width $\frac{\Delta_B}{2}$ when low-resolution and high-resolution quantizers are combined. This is shown in Fig. \ref{fig:quantizers} for a 4-bit high resolution quantizer and a 2-bit low-resolution quantizer. In general, the high resolution quantizer has $2^B$ bins, the low resolution quantizer has $2^b$ bins, while the combined quantizer has $2^B-2^b+1$ bins of size $\Delta_B$ and $2(2^b-1)$ bins of size $\frac{\Delta_B}{2}$.

\begin{figure}
\centerline{\includegraphics[width=.9\columnwidth]{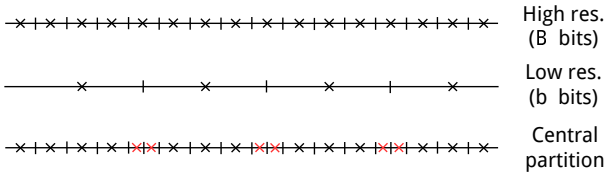}}
\caption{High and low resolution quantizers are staggered so that the central decoder achieves higher precision.}
\label{fig:quantizers}
\end{figure}

In this work we do not consider more complex quantizers, \emph{e.g.}, the Lloyd-Max method or vector quantization, because they are regarded as computationally too complex and with little or no performance gain, as shown in \cite{Milenkovic}.

An improvement over classic recovery from noisy measurements  \eqref{eq:CS_recovery_relaxed} can be obtained if we explicitly take into account the two different quantization noise levels. Hence, the side decoder solves the following optimization problem:
\begin{align}
\hspace*{0.015cm}\thetaside=\arg\min_{\thetav}\left\Vert \thetav\right\Vert _{1}\hspace{0.15cm}\text{s.t.}\hspace{0.05cm}\begin{cases}
\left\Vert \mathbf{y}^{(B)}-\Phi^{(B)}\Psi\thetav\right\Vert _{2}&\leq\varepsilon_{B}\\
\left\Vert \mathbf{y}^{(B)}-\Phi^{(B)}\Psi\thetav\right\Vert _{\infty}&\leq\frac{\Delta_{B}}{2}\\
\left\Vert \mathbf{y}^{(b)}-\Phi^{(b)}\Psi\thetav\right\Vert _{2}&\leq\varepsilon_{b}\\
\left\Vert \mathbf{y}^{(b)}-\Phi^{(b)}\Psi\thetav\right\Vert _{\infty}&\leq\frac{\Delta_{b}}{2}
\end{cases}
\label{sideGQ}
\end{align}
where $\Phi^{(B)}$ and $\Phi^{(b)}$ are the appropriate submatrices of $\Phi$, \emph{i.e.}, $\Phi$ restricted to the rows corresponding to the measurements with fine and coarse quantization levels, respectively. The two $\ell_2$-norm constraints in \eqref{sideGQ} take into account the different quantization levels in the two sets, being $\varepsilon_{B}$ and $\varepsilon_{b}$ the expected $\ell_2$ norm of quantization noise, which can be estimated as
$$
\varepsilon_{B}=\sqrt{\mB{i}\frac{\Delta_B^2}{12}}\qquad \text{and}\qquad \varepsilon_{b}=\sqrt{\mb{i}\frac{\Delta_b^2}{12}}~.
$$

The $\ell_\infty$-norm constraints enforce consistency with the quantization intervals - \emph{i.e.}, a reconstruction whose measurements would be quantized to the same intervals as the observed measurements, allowing better dequantization performance \cite{Democracy_Laska, goyal1998quantized} -, thus exploiting all the information available regarding the original unknown signal. Note that, while the $\ell_2$-norm constraints refer to the \emph{average} norm of the noise, the $\ell_\infty$-norm constraint is applied individually on every linear measurement. In \cite{ValsesiaICASSP}, it is shown that explicitly considering the previously explained structure of the noise in the reconstruction problem provides significant gains in the quality of the reconstruction, with respect to considering an average noise norm.

On the other hand, when both descriptions are received, the central decoder outputs the vector
\begin{equation}
\thetacentral=\arg\min_{\thetav}\left\Vert \thetav\right\Vert _{1}\ \text{s.t.}\ \left\Vert \mathbf{y}_\mathsf{C}-\Phi\Psi\thetav\right\Vert _{2}\leq\varepsilon_\mathsf{C}~,
\label{centralGQ}
\end{equation}
where $\mathbf{y}_\mathsf{C}$ is obtained in the following manner. For each measurement, the quantization bins of the high- and low-resolution versions are compared to determine the correct bin in the central partition to be used for the dequantization (see Fig.~\ref{fig:quantizers}). $\varepsilon_\mathsf{C}^2$ can be estimated as in \eqref{eq:central_quant} to properly take quantizer staggering into account.

We shall also consider a special case of CS-GQ, which we call CS-SPLIT. It is a simple technique that consists in splitting the measurements vector into 2 or more segments, so that instead of transmitting a single packet with all the measurements, two descriptions are created by packetizing half of the measurements in each. Referring to Figure~\ref{GQ_basics}, CS-SPLIT can be regarded as a special case of CS-GQ when $b=0$, providing the best central distortion but worst side distortion as shown in section \ref{GQ_opt}.

\section{ADMM-based side decoder}
\label{admm}
In this section, we propose a method for the resolution of problem \eqref{sideGQ} based on ADMM. In \cite{ValsesiaICASSP} problem \eqref{sideGQ} is solved by CVX \cite{cvx} \cite{cvx2}, a framework to model and solve convex optimization problems, which employs the semidefinite programming (SDP) solver SeDuMi \cite{sedumi}. SeDuMi uses interior point methods to solve general SDP problems, resulting in rather slow performance, especially for large scale problems due to the fact that those are second-order methods\footnote{See \cite{monteiro2003first} for details.}. In this paper, we propose a significantly faster reconstruction algorithm based on the Alternating Directions Method of Multipliers (ADMM) tailored to solve problem \eqref{sideGQ}. ADMM \cite{boydadmm} is a popular first-order method that allows cheaper iterations, and that combines the robustness of the method of multipliers (a method to solve a constrained optimization problem via unconstrained minimization of an augmented Lagrangian function, including the constraints as penalty terms and an additional $\ell_2$ penalty as augmenting term)\cite{methodmultiplier} in terms of convergence, and the decomposition property, similar to dual ascent \cite{boyd2004convex}, which also allows for a distributed implementation. It is particularly convenient to solve the side decoding problem \eqref{sideGQ} because it allows to use alternating updates to the dual variables of an augmented Lagrangian, projecting over the sets defined by the multiple constraints. A proximal gradient method is used to update the primal variable $\thetav$ at each iteration. In the following we report the derivation of the algorithm, while the complete procedure is summarized in Algorithm \ref{decoding_algo}.

\subsection{Notation}
Let us first introduce some notation that is used during the derivation of the algorithm. First, we define the sets $\mathcal{C}_2$ and $\mathcal{C}_\infty$, which are feasible sets defined by the $\ell_2$ and $\ell_\infty$ constraints of \eqref{sideGQ}, as a function of generic quantities $A$, $\mathbf{b}$ and $c$, as
\begin{align*}
\mathcal{C}_2\left( A, \mathbf{b}, c\right) &= \left\lbrace \thetav \in \mathbb{R}^n \ \vert \ \Vert A\thetav-\mathbf{b} \Vert_2 \leq c \right\rbrace\\
\mathcal{C}_\infty\left( A, \mathbf{b}, c\right) &= \left\lbrace \thetav \in \mathbb{R}^n \ \vert \ \Vert A\thetav-\mathbf{b} \Vert_\infty \leq c \right\rbrace~,
\end{align*}
along with the corresponding indicator functions, namely
\begin{align*}
\indfun_{\mathcal{C}_2\left( A, \mathbf{b}, c\right)}\left( \thetav \right) &= \begin{cases}
0 \quad \text{if } \thetav \in \mathcal{C}_2\left( A, \mathbf{b}, c\right) \\
\infty \quad \text{otherwise}
\end{cases}\\
\indfun_{\mathcal{C}_\infty\left( A, \mathbf{b}, c\right)}\left( \thetav \right) &= \begin{cases}
0 \quad \text{if } \thetav \in \mathcal{C}_\infty\left( A, \mathbf{b}, c\right)\\
\infty \quad \text{otherwise}
\end{cases}~.
\end{align*}
The quantities defined above will be used to replace the constrained formulaton of problem \eqref{sideGQ} with an unconstrained one. Moreover, we define the soft-thresholding operator with threshold $\lambda$ as
$$
\mathcal{S}_\lambda \left( \mathbf{v} \right) = \text{sgn}(\mathbf{v})\odot\text{max}\left\lbrace \mathbf{0}, \vert \mathbf{v} \vert - \lambda \right\rbrace~,
$$
where $\vert \cdot \vert$ denotes the vector containing the absolute values of $\cdot$ and $\odot$ denotes the elementwise product, and the operators performing projection over $\ell_2$ ball
$$
\mathcal{P}_{\mathcal{C}_2\left( \mathbf{b}, c\right)} \left( \mathbf{v} \right) = \begin{cases}
\mathbf{b} \!+\! c\frac{\mathbf{v}\!-\!\mathbf{b}}{\Vert \mathbf{v}\!-\!\mathbf{b} \Vert_2} \hspace*{2mm} \text{if } \Vert \mathbf{v}\!-\!\mathbf{b} \Vert_2 \!>\! c\\
\mathbf{v} \quad \text{otherwise}
\end{cases}
$$
and $\ell_\infty$ ball
$$
\text{clip}_{\left[ -c , +c \right]} \left( \mathbf{v} \right) = \text{sgn}(\mathbf{v})\odot\min\left\lbrace \vert \mathbf{v} \vert, c \right\rbrace~,
$$
where $\min\{\cdot\}$ and $\max\{\cdot\}$ have to be intended as  elementwise operators.

\subsection{Unconstrained problem and auxiliary variables}
It is possible to use indicator functions to transform \eqref{sideGQ} into an unconstrained minimization problem. First, we define the following quantities
\begin{align*}
H = \Phi^{(B)}\Psi\quad&\text{and}\quad\mathbf{h} = \mathbf{y}^{(B)}\\
L = \Phi^{(b)}\Psi\quad&\text{and}\quad\mathbf{l} = \mathbf{y}^{(b)}~.
\end{align*}
Then,
\begin{align*}
\thetaside = \arg\underset{\thetav}{\min} \Big\lbrace &\left\Vert \thetav\right\Vert _{1} + \indfun_{\mathcal{C}_2\left( \varepsilon_B, H, \mathbf{h} \right)}\left( \thetav \right) + \indfun_{\mathcal{C}_2\left( \varepsilon_b, L, \mathbf{l} \right)}\left( \thetav \right)\\ &+ \indfun_{\mathcal{C}_\infty\left( \varepsilon_B, H, \mathbf{h} \right)}\left( \thetav \right) + \indfun_{\mathcal{C}_\infty\left( \varepsilon_b, L, \mathbf{l} \right)}\left( \thetav \right) \Big\rbrace \nonumber \\
=\arg\underset{\thetav}{\min} \Big\lbrace &\left\Vert \thetav\right\Vert _{1} + \indfun_{\mathcal{C}_2\left( \varepsilon_B, I, \mathbf{h} \right)}\left( H\thetav \right) + \indfun_{\mathcal{C}_2\left( \varepsilon_b, I, \mathbf{l} \right)}\left( L\thetav \right)\\ &+ \indfun_{\mathcal{C}_\infty\left( \frac{\Delta_B}{2}, I, \mathbf{h} \right)}\left( H\thetav \right) + \indfun_{\mathcal{C}_\infty\left( \frac{\Delta_b}{2}, I, \mathbf{l} \right)}\left( L\thetav \right) \Big\rbrace
\end{align*}
where $I$ is the identity matrix of suitable size. We now introduce auxiliary variables $\mathbf{w}, \mathbf{z}, \mathbf{p}, \mathbf{q}$.
\begin{align*}
\mathbf{w}=H\thetav, \quad \mathbf{z}=L\thetav, \quad \mathbf{p}=H\thetav, \quad \mathbf{q}=L\thetav
\end{align*}

The problem can be now recast as:
\begin{align*}
\hat{\thetav} = \arg\underset{\thetav,\mathbf{w},\mathbf{z},\mathbf{p},\mathbf{q}}{\min} \Big\lbrace &\left\Vert \thetav\right\Vert _{1} + \indfun_{\mathcal{C}_2\left( \varepsilon_B, I, \mathbf{h} \right)}\left( \mathbf{w} \right) + \indfun_{\mathcal{C}_2\left( \varepsilon_b, I, \mathbf{l} \right)}\left( \mathbf{z} \right)\nonumber\\
&+ \indfun_{\mathcal{C}_\infty\left(
 \frac{\Delta_B}{2}, I, \mathbf{h} \right)}\left( \mathbf{p} \right) + \indfun_{\mathcal{C}_\infty\left( \frac{\Delta_b}{2}, I, \mathbf{l} \right)}\left( \mathbf{q} \right) \Big\rbrace \nonumber \\
\text{subject to} \qquad &\mathbf{w} = H\thetav,\quad \mathbf{z} = L\thetav, \quad \mathbf{p} = H\thetav, \quad \mathbf{q} = L\thetav
\end{align*}

We can include these new constraints by building an augmented Lagrangian functional:
\begin{align*}
J = \left\Vert \thetav\right\Vert _{1} &+ \indfun_{\mathcal{C}_2\left( \varepsilon_B, I, \mathbf{h} \right)}\left( \mathbf{w} \right) + \indfun_{\mathcal{C}_2\left( \varepsilon_b, I, \mathbf{l} \right)}\left( \mathbf{z} \right) + \indfun_{\mathcal{C}_\infty\left( \frac{\Delta_B}{2}, I, \mathbf{h} \right)}\left( \mathbf{p} \right) \nonumber\\
& + \indfun_{\mathcal{C}_\infty\left( \frac{\Delta_b}{2}, I, \mathbf{l} \right)}\left( \mathbf{q} \right) + 
\mathbf{u}^T\left( \mathbf{w}-H\thetav \right) +
\mathbf{v}^T\left( \mathbf{z}-L\thetav \right)\nonumber\\
&+ \mathbf{s}^T\left( \mathbf{p}-H\thetav \right) +
\mathbf{t}^T\left( \mathbf{q}-L\thetav \right) + 
\frac{\rho}{2}\Vert \mathbf{w}-H\thetav \Vert_2^2 \nonumber \\
&+ \frac{\rho}{2}\Vert \mathbf{z}-L\thetav \Vert_2^2 + 
\frac{\rho}{2}\Vert \mathbf{p}-H\thetav \Vert_2^2 + 
\frac{\rho}{2}\Vert \mathbf{q}-L\thetav \Vert_2^2~.
\end{align*}
Moreover, we switch to the scaled form by defining the following residuals and scaled dual variables:
\begin{align*}
\mathbf{r}_u&=\mathbf{w}\!-\!H\thetav & \mathbf{r}_v&=\mathbf{z}\!-\!L\thetav & \mathbf{r}_s&=\mathbf{p}\!-\!H\thetav & \mathbf{r}_t&=\mathbf{q}\!-\!L\thetav \\
\mathbf{u}_\rho&=\rho^{-1}\mathbf{u} & \mathbf{v}_\rho&=\rho^{-1}\mathbf{v} & \mathbf{s}_\rho&=\rho^{-1}\mathbf{s}&  \mathbf{t}_\rho&=\rho^{-1}\mathbf{t}
\end{align*}
Hence, the augmented Lagrangian can be rewritten as:
\begin{align}
J = \left\Vert \thetav\right\Vert _{1} &+ \indfun_{\mathcal{C}_2\left( \varepsilon_B, I, \mathbf{h} \right)}\left( \mathbf{w} \right) + \indfun_{\mathcal{C}_2\left( \varepsilon_b, I, \mathbf{l} \right)}\left( \mathbf{z} \right) + \indfun_{\mathcal{C}_\infty\left( \frac{\Delta_B}{2}, I, \mathbf{h} \right)}\left( \mathbf{p} \right)\nonumber\\
&+ \indfun_{\mathcal{C}_\infty\left( \frac{\Delta_b}{2}, I, \mathbf{l} \right)}\left( \mathbf{q} \right) +
\frac{\rho}{2}\Vert \mathbf{r}_u+\mathbf{u}_\rho \Vert_2^2 - \frac{\rho}{2}\Vert \mathbf{u}_\rho \Vert_2^2 \nonumber\\
&+ \frac{\rho}{2}\Vert \mathbf{r}_v+\mathbf{v}_\rho \Vert_2^2 - \frac{\rho}{2}\Vert \mathbf{v}_\rho \Vert_2^2
+ \frac{\rho}{2}\Vert \mathbf{r}_s+\mathbf{s}_\rho \Vert_2^2 \nonumber\\
&- \frac{\rho}{2}\Vert \mathbf{s}_\rho \Vert_2^2 + 
\frac{\rho}{2}\Vert \mathbf{r}_t+\mathbf{t}_\rho \Vert_2^2 - \frac{\rho}{2}\Vert \mathbf{t}_\rho \Vert_2^2~. \label{augmentedLagrange} 
\end{align}

\subsection{Alternating directions minimization}

\begin{algorithm*}
\begin{algorithmic}
\Require $\alpha, \lambda, \rho, H, L, \mathbf{h}, \mathbf{l}$
\State $\thetav^{(0,0)}\gets\mathbf{0}, \mathbf{w}^{(0)}=\mathbf{p}^{(0)} \gets \mathbf{0}, \mathbf{z}^{(0)}=\mathbf{q}^{(0)}\gets\mathbf{0}, \mathbf{u}_\rho^{(0)}=\mathbf{s}_\rho^{(0)} \gets \mathbf{0} , \mathbf{v}_\rho^{(0)}=\mathbf{t}_\rho^{(0)}\gets\mathbf{0}$
\While{Outer stopping criterion is not met}
\While{Inner stopping criterion is not met}
\begin{align*}
\hspace*{-2.8cm}
\thetav^{(i,j+1)} = \mathcal{S}_\lambda \Bigl( \thetav^{(i,j)} - \alpha \rho &\Bigl[  H^T\left( H\thetav^{(i,j)}-\mathbf{w^{(i)}}-\mathbf{u}_\rho^{(i)} \right)  \nonumber +  L^T\left( L\thetav^{(i,j)}-\mathbf{z}^{(i)}-\mathbf{v}_\rho^{(i)} \right) \nonumber\\&+ H^T\left( H\thetav^{(i,j)}-\mathbf{p}^{(i)}-\mathbf{s}_\rho^{(i)} \right) \nonumber + L^T\left( L\thetav^{(i,j)}-\mathbf{q}^{(i)}-\mathbf{t}_\rho^{(i)} \right) \Bigr] \Bigr)
\end{align*}
\State $j \gets j+1$
\EndWhile
\State After stopping: $\thetav^{(i+1,0)}=\thetav^{(i,j)}$
\begin{flalign}
\hspace*{0.5cm}
&\mathbf{w}^{(i+1)} = \begin{cases}
\mathbf{h} + \varepsilon_B\frac{H\thetav^{(i+1,0)} - \mathbf{u}_\rho^{(i)}-\mathbf{h}}{\Vert H\thetav^{(i+1,0)} - \mathbf{u}_\rho^{(i)}-\mathbf{h} \Vert_2} \quad \text{if } \Vert H\thetav^{(i+1,0)} - \mathbf{u}_\rho^{(i)}-\mathbf{h} \Vert_2 > \varepsilon_B\\
H\thetav^{(i+1,0)} - \mathbf{u}_\rho^{(i)} \quad \text{otherwise}
\end{cases} & \label{wupdate}&\\
&\mathbf{z}^{(i+1)} = \begin{cases}
\mathbf{l} + \varepsilon_b\frac{L\thetav^{(i+1,0)} - \mathbf{v}_\rho^{(i)}-\mathbf{l}}{\Vert L\thetav^{(i+1,0)} - \mathbf{v}_\rho^{(i)}-\mathbf{l} \Vert_2} \quad \text{if } \Vert L\thetav^{(i+1,0)} - \mathbf{v}_\rho^{(i)}-\mathbf{l} \Vert_2 > \varepsilon_b\\
L\thetav^{(i+1,0)} - \mathbf{v}_\rho^{(i)} \quad \text{otherwise}
\end{cases} & \label{zupdate}&\\
&\mathbf{p}^{(i+1)} = \text{clip}_{\left[ -\frac{\Delta_B}{2} , \frac{\Delta_B}{2} \right]} \left( H\thetav^{(i+1,0)} - \mathbf{s}_\rho^{(i)} - \mathbf{h} \right) + \mathbf{h} & \label{pupdate}&\\
&\mathbf{q}^{(i+1)} = \text{clip}_{\left[ -\frac{\Delta_b}{2} , \frac{\Delta_b}{2} \right]} \left( L\thetav^{(i+1,0)} - \mathbf{t}_\rho^{(i)} - \mathbf{l} \right) + \mathbf{l} & \label{qupdate}&\\
&\mathbf{u}_\rho^{(i+1)} = \mathbf{u}_\rho^{(i)} + \mathbf{w}^{(i+1)} - H\thetav^{(i+1,0)} \nonumber&\\
&\mathbf{v}_\rho^{(i+1)} = \mathbf{v}_\rho^{(i)} + \mathbf{z}^{(i+1)} - L\thetav^{(i+1,0)} \nonumber&\\
&\mathbf{s}_\rho^{(i+1)} = \mathbf{s}_\rho^{(i)} + \mathbf{p}^{(i+1)} - H\thetav^{(i+1,0)} \nonumber&\\
&\mathbf{t}_\rho^{(i+1)} = \mathbf{t}_\rho^{(i)} + \mathbf{q}^{(i+1)} - L\thetav^{(i+1,0)} \nonumber&
\end{flalign}
\State $i \gets i+1$
\EndWhile
\Ensure $\thetaside \gets \thetav^{(i,0)}$ 

\end{algorithmic}
\caption{CS-GQ side decoder.}
\label{decoding_algo}
\end{algorithm*}

It is now possible to minimize \eqref{augmentedLagrange} in an iterative fashion using alternating directions, \emph{i.e.}, minimizing over one variable at a time. This constitutes the outer loop of Algorithm~\ref{decoding_algo}.
\begin{align}
\thetav^{(i+1)} &= \arg\underset{\thetav}{\min} \Big\lbrace \left\Vert \thetav\right\Vert _{1} \nonumber\\&+ 
\frac{\rho}{2}\Vert \mathbf{w}^{(i)}-H\thetav+\mathbf{u}_\rho^{(i)} \Vert_2^2 + 
\frac{\rho}{2}\Vert \mathbf{z}^{(i)}-L\thetav+\mathbf{v}_\rho^{(i)} \Vert_2^2  \nonumber\\ &+
\frac{\rho}{2}\Vert \mathbf{p}^{(i)}-H\thetav+\mathbf{s}_\rho^{(i)} \Vert_2^2  + 
\frac{\rho}{2}\Vert \mathbf{q}^{(i)}-L\thetav+\mathbf{t}_\rho^{(i)} \Vert_2^2  \Big\rbrace \label{mintheta}\\
\mathbf{w}^{(i+1)} &= \arg\underset{\mathbf{w}}{\min} \Big\lbrace \indfun_{\mathcal{C}_2\left( \varepsilon_B, I, \mathbf{h} \right)}\left( \mathbf{w} \right) \nonumber\\&+ \frac{\rho}{2}\Vert \mathbf{w}-H\thetav^{(i+1)}+\mathbf{u}_\rho^{(i)} \Vert_2^2 \Big\rbrace \label{minw}\\
\mathbf{z}^{(i+1)} &= \arg\underset{\mathbf{z}}{\min} \Big\lbrace \indfun_{\mathcal{C}_2\left( \varepsilon_b, I, \mathbf{l} \right)}\left( \mathbf{z} \right) \nonumber\\&+ \frac{\rho}{2}\Vert \mathbf{z}-L\thetav^{(i+1)}+\mathbf{v}_\rho^{(i)} \Vert_2^2 \Big\rbrace \label{minz}\\
\mathbf{p}^{(i+1)} &= \arg\underset{\mathbf{p}}{\min} \Big\lbrace \indfun_{\mathcal{C}_\infty\left( \frac{\Delta_B}{2}, I, \mathbf{h} \right)}\left( \mathbf{p} \right) \nonumber\\&+ \frac{\rho}{2}\Vert \mathbf{p}-H\thetav^{(i+1)}+\mathbf{s}_\rho^{(i)} \Vert_2^2 \Big\rbrace \label{minp}
\end{align}
\begin{align}
\mathbf{q}^{(i+1)} &= \arg\underset{\mathbf{q}}{\min} \Big\lbrace \indfun_{\mathcal{C}_\infty\left( \frac{\Delta_b}{2}, I, \mathbf{l} \right)}\left( \mathbf{q} \right) \nonumber\\&+ \frac{\rho}{2}\Vert \mathbf{q}-L\thetav^{(i+1)}+\mathbf{t}_\rho^{(i)} \Vert_2^2 \Big\rbrace \label{minq}\\
\mathbf{u}_\rho^{(i+1)} &= \mathbf{u}_\rho^{(i)} + \mathbf{w}^{(i+1)} - H\thetav^{(i+1)}\nonumber\\
\mathbf{v}_\rho^{(i+1)} &= \mathbf{v}_\rho^{(i)} + \mathbf{z}^{(i+1)} - L\thetav^{(i+1)}\nonumber\\
\mathbf{s}_\rho^{(i+1)} &= \mathbf{s}_\rho^{(i)} + \mathbf{p}^{(i+1)} - H\thetav^{(i+1)}\nonumber\\
\mathbf{t}_\rho^{(i+1)} &= \mathbf{t}_\rho^{(i)} + \mathbf{q}^{(i+1)} - L\thetav^{(i+1)}\nonumber
\end{align}

It can be seen that \eqref{mintheta} involves minimizing a functional composed of two main parts: a smooth  (differentiable) part $f(\thetav) = \frac{\rho}{2}\Vert \mathbf{w}^{(i)}-H\thetav+\mathbf{u}_\rho^{(i)} \Vert_2^2 + 
\frac{\rho}{2}\Vert \mathbf{z}^{(i)}-L\thetav+\mathbf{v}_\rho^{(i)} \Vert_2^2  +
\frac{\rho}{2}\Vert \mathbf{p}^{(i)}-H\thetav+\mathbf{s}_\rho^{(i)} \Vert_2^2  + 
\frac{\rho}{2}\Vert \mathbf{q}^{(i)}-L\thetav+\mathbf{t}_\rho^{(i)} \Vert_2^2$ whose gradient can be computed analytically and a non-smooth (non-differentiable) part $g(\thetav)=\left\Vert \thetav\right\Vert _{1}$. Functionals of this kind can be minimized using the proximal gradient method \cite{proximal}, \emph{i.e.}, an iterative method that takes a step towards the negative gradient of the smooth component and then computes the proximal mapping over the non-smooth part. It is easy to show that the gradient of the smooth part $f(\thetav)$ is:
\begin{align*}
&\nabla_{\thetav} f(\thetav) = \\
& = \rho H^T \left( H\thetav^{(i)} \!-\! \mathbf{w}^{(i)} \!-\! \mathbf{u}_\rho^{(i)} \right) +
\rho L^T \left( L\thetav^{(i)} \!-\! \mathbf{z}^{(i)} \!-\! \mathbf{v}_\rho^{(i)} \right) \\ &+
\rho H^T \left( H\thetav^{(i)} \!-\! \mathbf{p}^{(i)} \!-\! \mathbf{s}_\rho^{(i)} \right) +
\rho L^T \left( L\thetav^{(i)} \!-\! \mathbf{q}^{(i)} \!-\! \mathbf{t}_\rho^{(i)} \right)~,
\end{align*}
while the proximity operator of non-smooth part $g(\thetav)$ is the soft-thresholding function.
Hence, one can iteratively take a step in the gradient direction and apply the soft thresholding function in order to eventually find $\thetav^{(i+1)}$, \emph{i.e.}, the minimizer of \eqref{mintheta}. This constitutes the inner loop of Algorithm~\ref{decoding_algo}.

The subsequent minimization problems in \eqref{minw}, \eqref{minz}, \eqref{minp} and \eqref{minq} all involve projections over the sets defined using the indicator functions. It can be seen that \eqref{minw} and \eqref{minz} correspond to projections over the $\ell_2$ balls of radii $\varepsilon_B$ and $\varepsilon_b$ centered at $\mathbf{h}$ and $\mathbf{l}$ respectively. Similarly, \eqref{minp} and \eqref{minq} correspond to projections over the $\ell_\infty$ balls of diameters $\Delta_B$ and $\Delta_b$ centered at $\mathbf{h}$ and $\mathbf{l}$ respectively. The results of the projections are \eqref{wupdate}, \eqref{zupdate}, \eqref{pupdate} and \eqref{qupdate}. Notice that the projection operations and the update of the dual variables can be performed in parallel.

As stopping criteria of the inner and outer loops, we check the distance between two successive iterations. Hence, two constants $\varepsilon^{\mathsf{stop}}_i$ and $\varepsilon^{\mathsf{stop}}_j$ are properly chosen such that the stopping conditions are
\begin{align}
\frac{\Vert \thetav^{(i,j+1)} - \thetav^{(i,j)} \Vert_2}{\Vert \thetav^{(i,j)} \Vert_2} &< \varepsilon^{\mathsf{stop}}_j \tag{Inner stopping criterion} \\
\frac{\Vert \thetav^{(i+1,0)} - \thetav^{(i,0)} \Vert_2}{\Vert \thetav^{(i,0)} \Vert_2} &< \varepsilon^{\mathsf{stop}}_i \tag{Outer stopping criterion}.
\end{align}

\section{Theoretical guarantees}
\label{sec:theory}
In this section we present some theoretical results concerning CS-GQ. First, Theorem~\ref{stability} provides a guarantee of stable recovery of signals from single descriptions, using the side decoder presented in \eqref{sideGQ}. As it is common in the literature about CS (see for example \cite{davenport2012introduction}), this kind of bounds is used as a guarantee that the error does not explode, rather than giving an exact characterization of the error itself.

\begin{theorem}
\label{stability}
\textbf{\emph{(Stable side recovery)}}
Suppose that $A=\Phi\Psi$ satisfies the RIP of order $2k$ with $\delta_{2k}<\sqrt{2}-1$ and let $\mathbf{y} = A\thetav+\mathbf{e}$ where, without loss of generality, $\mathbf{e}=\left[ \mathbf{e}_B^T \  \mathbf{e}_b^T \right]^T$ with $\Vert \mathbf{e}_B \Vert_2 \leq \varepsilon_B$, $\Vert \mathbf{e}_b \Vert_2 \leq \varepsilon_b$, $\Vert \mathbf{e}_B \Vert_\infty \leq \frac{\Delta_B}{2}$, $\Vert \mathbf{e}_b \Vert_\infty \leq \frac{\Delta_b}{2}$. Then the solution to \eqref{sideGQ} obeys
\begin{multline*}
\big\Vert \thetaside - \thetav \big\Vert_2 \leq C_0 \frac{\sigma_k\left(\thetav\right)_1}{\sqrt{k}} \\+ C_2 \left( \min\left\lbrace \varepsilon_B , \frac{\Delta_B}{2}\sqrt{m_B} \right\rbrace + \min\left\lbrace \varepsilon_b , \frac{\Delta_b}{2}\sqrt{m_b} \right\rbrace \right)
\end{multline*}
where $\sigma_k\left(\thetav\right)_1$ is the $l_1$-norm of the error incurred by approximating \thetav\ with its $k$ largest-magnitude components, $m_B$ and $m_b$ are the number of rows of matrices $\Phi^{(B)}\Psi$ and $\Phi^{(b)}\Psi$ respectively and the constants are 
\begin{align*}
C_0 = 2\frac{1-\left( 1-\sqrt{2}\delta_{2k} \right)}{1-\left( 1+\sqrt{2}\delta_{2k} \right)} \qquad C_2 = 4\frac{\sqrt{1+\delta_{2k}}}{1-\left( 1+\sqrt{2}\delta_{2k} \right)}.
\end{align*}
\end{theorem}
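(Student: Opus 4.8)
The plan is to reduce the statement to the classical RIP-based stable-recovery guarantee for Basis Pursuit DeNoising (as in \cite{davenport2012introduction}), whose constants are precisely $C_0$ and $C_2$, and to use the two $\ell_\infty$ constraints of \eqref{sideGQ} only to sharpen the effective $\ell_2$ noise budget on each block. The single non-routine ingredient is this sharpening; the rest is the textbook argument applied to $A=\Phi\Psi$.

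First I would record the elementary inequality $\Vert\mathbf{e}_B\Vert_2\le\sqrt{m_B}\,\Vert\mathbf{e}_B\Vert_\infty$. Combined with the hypotheses $\Vert\mathbf{e}_B\Vert_2\le\varepsilon_B$ and $\Vert\mathbf{e}_B\Vert_\infty\le\frac{\Delta_B}{2}$, it gives $\Vert\mathbf{e}_B\Vert_2\le\tilde\varepsilon_B:=\min\{\varepsilon_B,\frac{\Delta_B}{2}\sqrt{m_B}\}$, and likewise $\Vert\mathbf{e}_b\Vert_2\le\tilde\varepsilon_b:=\min\{\varepsilon_b,\frac{\Delta_b}{2}\sqrt{m_b}\}$. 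The same two bounds hold for the residual of the minimizer, since $\thetaside$ satisfies all four constraints of \eqref{sideGQ}; in particular $\Vert\mathbf{y}^{(B)}-\Phi^{(B)}\Psi\thetaside\Vert_2\le\tilde\varepsilon_B$ and $\Vert\mathbf{y}^{(b)}-\Phi^{(b)}\Psi\thetaside\Vert_2\le\tilde\varepsilon_b$. I would then observe that $\thetav$ is feasible for \eqref{sideGQ}: the four constraints there are exactly the four bounds assumed on $\mathbf{e}_B$ and $\mathbf{e}_b$, so feasibility is automatic.

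Feasibility supplies the two standard ingredients for the error $\mathbf{h}:=\thetaside-\thetav$. From $\ell_1$-optimality, $\Vert\thetaside\Vert_1\le\Vert\thetav\Vert_1$, which—after partitioning the coordinates into the set $T_0$ of the $k$ largest-magnitude entries of $\mathbf{h}$, the set $T_1$ of the next $k$, and so on—yields the cone condition $\Vert\mathbf{h}_{T_0^c}\Vert_1\le\Vert\mathbf{h}_{T_0}\Vert_1+2\sigma_k(\thetav)_1$. From joint feasibility of $\thetav$ and $\thetaside$, together with the block structure $\Phi\Psi=[(\Phi^{(B)}\Psi)^T\ (\Phi^{(b)}\Psi)^T]^T$ and the estimates above, the triangle inequality gives the tube condition
\[
\Vert A\mathbf{h}\Vert_2\le 2\sqrt{\tilde\varepsilon_B^2+\tilde\varepsilon_b^2}\le 2(\tilde\varepsilon_B+\tilde\varepsilon_b),
\]
the last step using $\sqrt{a^2+b^2}\le a+b$ for $a,b\ge 0$; this is precisely what replaces the term $2\varepsilon$ of the classical statement and produces the sum of two minima in the claimed bound.

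With the cone and tube conditions in hand, the remainder is boilerplate: bound the tail $\sum_{j\ge2}\Vert\mathbf{h}_{T_j}\Vert_2\le k^{-1/2}\Vert\mathbf{h}_{T_0^c}\Vert_1$, apply the RIP near-orthogonality lemma to control $\langle A\mathbf{h}_{T_0\cup T_1},A\mathbf{h}\rangle$, use $\delta_{2k}<\sqrt{2}-1$ to close the resulting recursion for $\Vert\mathbf{h}_{T_0\cup T_1}\Vert_2$, and combine with $\Vert\mathbf{h}\Vert_2\le\Vert\mathbf{h}_{T_0\cup T_1}\Vert_2+\Vert\mathbf{h}_{(T_0\cup T_1)^c}\Vert_2$; collecting terms reproduces $C_0$ and $C_2$ with $\varepsilon\mapsto\tilde\varepsilon_B+\tilde\varepsilon_b$. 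I do not expect a genuine obstacle here: the $\ell_\infty$ constraints never enter the RIP computation, so the whole subtlety is confined to the one-line observation that they tighten each per-block $\ell_2$ radius $\varepsilon_B,\varepsilon_b$ to $\tilde\varepsilon_B,\tilde\varepsilon_b$.
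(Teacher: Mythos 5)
Your proposal is correct and follows essentially the same route as the paper: both arguments observe that the $\ell_\infty$ constraints tighten each block's $\ell_2$ residual to $\min\{\varepsilon_B,\frac{\Delta_B}{2}\sqrt{m_B}\}$ and $\min\{\varepsilon_b,\frac{\Delta_b}{2}\sqrt{m_b}\}$, establish feasibility of $\thetav$ and the $\ell_1$-optimality of $\thetaside$, and then feed the resulting tube bound on $\Vert A(\thetaside-\thetav)\Vert_2$ into the standard RIP cone/near-orthogonality machinery (which the paper invokes as a cited lemma rather than unrolling, and which you sharpen negligibly via $2\sqrt{\tilde\varepsilon_B^2+\tilde\varepsilon_b^2}$ before relaxing to the same sum of minima).
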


\begin{proof}
We are interested in deriving a bound for $\Vert \mathbf{d} \Vert_2 = \big\Vert \thetaside - \thetav \big\Vert_2$. We know that $\thetav \in \left\lbrace \mathcal{C}_2\left( \varepsilon_B, H, \mathbf{h} \right) \cap \mathcal{C}_2\left( \varepsilon_b, L, \mathbf{l} \right) \cap \mathcal{C}_\infty\left( \frac{\Delta_B}{2}, H, \mathbf{h} \right) \cap \mathcal{C}_\infty\left( \frac{\Delta_b}{2}, L, \mathbf{l} \right) \right\rbrace$ (see Sec. \ref{admm} for notation). Moreover, the solution to \eqref{sideGQ} is either $\thetav$ or one with lower $l_1$ norm, so that we can say that $\big\Vert \thetaside \big\Vert_1 \leq \Vert \thetav \Vert_1$. Using \cite[Lemma 1.6]{davenport2012introduction} we know that
\begin{align*}
\Vert \mathbf{d} \Vert_2 \leq C_0 \frac{\sigma_k\left(\thetav\right)_1}{\sqrt{k}} + C_1 \frac{\vert \left\langle A\mathbf{d}_{\Lambda} , A\mathbf{d} \right\rangle \vert}{ \Vert \mathbf{d}_{\Lambda} \Vert_2 }
\end{align*}
where $C_1 = \frac{2}{1-(1+\sqrt{2})\delta_{2k}}$, $\Lambda = \Lambda_0 \cup \Lambda_1$, being $\Lambda_0$ the set of the $k$ components with largest magnitude of $\thetav$ and $\Lambda_1$ the set of the $k$ components with largest magnitude of $\mathbf{d}_{\Lambda_0^c}$ (subscript denotes restriction to the components indexed by the subscript set).
Notice that
\begin{align*}
\Vert A\mathbf{d} \Vert_2 &= \big\Vert A\thetaside - A\thetav \big\Vert_2 = \big\Vert A\thetaside - \mathbf{y} + \mathbf{y} - A\thetav \big\Vert_2\\ & = \Bigg\Vert \left[ \begin{array}{c}
\Phi^{(B)}\Psi\thetaside\\
\Phi^{(b)}\Psi\thetaside
\end{array} \right] - \left[ \begin{array}{c}
\mathbf{y}^{(B)}\\
\mathbf{y}^{(b)}
\end{array} \right] \\
&+ \left[ \begin{array}{c}
\mathbf{y}^{(B)}\\
\mathbf{y}^{(b)}
\end{array} \right] - \left[ \begin{array}{c}
\Phi^{(B)}\Psi\thetav\\
\Phi^{(b)}\Psi\thetav
\end{array} \right]  \Bigg\Vert_2 \\
&\leq \left\Vert \Phi^{(B)}\Psi \thetaside - \mathbf{y}^{(B)} \right\Vert_2 + \left\Vert \Phi^{(b)}\Psi\thetaside - \mathbf{y}^{(b)} \right\Vert_2 \\
&+\left\Vert \Phi^{(B)}\Psi\thetav - \mathbf{y}^{(B)} \right\Vert_2 + \left\Vert \Phi^{(b)}\Psi\thetav - \mathbf{y}^{(b)} \right\Vert_2 \\
&\leq 2\min\left\lbrace \varepsilon_B , \frac{\Delta_B}{2}\sqrt{m_B} \right\rbrace + 2\min\left\lbrace \varepsilon_b , \frac{\Delta_b}{2}\sqrt{m_b} \right\rbrace
\end{align*}
Using the Cauchy-Schwarz inequality we write
\begin{multline*}
\vert \left\langle A\mathbf{d}_{\Lambda} , A\mathbf{d} \right\rangle \vert \leq \Vert A\mathbf{d}_{\Lambda} \Vert_2 \Vert A\mathbf{d} \Vert_2 \\
\leq 2\Vert \mathbf{d}_{\Lambda} \Vert_2 \sqrt{1\!+\!\delta_{2k}} \left( \min\left\lbrace \varepsilon_B , \frac{\Delta_B}{2}\sqrt{m_B} \right\rbrace \!+\! \min\left\lbrace \varepsilon_b , \frac{\Delta_b}{2}\sqrt{m_b} \right\rbrace \right)
\end{multline*}
Thus,
\begin{multline*}
\Vert \mathbf{d} \Vert_2 \leq C_0 \frac{\sigma_k\left(\thetav\right)_1}{\sqrt{k}}\\
+ C_2 \left( \min\left\lbrace \varepsilon_B , \frac{\Delta_B}{2}\sqrt{m_B} \right\rbrace + \min\left\lbrace \varepsilon_b , \frac{\Delta_b}{2}\sqrt{m_b} \right\rbrace \right)
\end{multline*}
\end{proof}

Next, we characterize the performance of an oracle side decoder for sparse signals so that the result can be used to provide an oracle optimality condition for the optimization of the CS-GQ parameters, as shown in Sec. \ref{sec:rd_opt}. The oracle decoder is an ideal decoder that knows perfectly the support $\mathcal{S}$ of the sparse signal. For such systems, recovery amounts to computing $\hat{\thetav} = A_{\mathcal{S}}^{\dagger}\mathbf{y}$, where $A_{\mathcal{S}}^{\dagger}$ is the Moore-Penrose pseudoinverse of matrix $A$ restricted to the columns indexed by $\mathcal{S}$. Often, the oracle receiver is used to evaluate the exact performance of CS reconstruction algorithms, since CS decoding involves a nonlinear reconstruction step, whose distortion performance is hard to characterize exactly.

\begin{theorem}
\textbf{\emph{(Oracle recovery)}}
\label{oracle_thm}
Suppose that $\thetav$ is $k$-sparse and that $\Phi$ has i.i.d. Gaussian zero-mean entries having variance $\frac{1}{m}$. Let $\mathbf{y} = \Phi\Psi\thetav$ be the vector of measurements to be quantized using staggered low- and high-resolution quantizers. Suppose the oracle decoder is used for CS recovery. Then, the expected side distortion in the high-rate regime is:
\begin{align*}
\mathbb{E}\left[ \left\Vert \thetaside - \thetav \right\Vert_2^2 \right] = \frac{kr^2}{\left(m-k-1\right)}\frac{m}{24}\left( 2^{-2B} + 2^{-2b} \right)
\end{align*}
and the expected central distortion is:
\begin{align*}
\mathbb{E}\left[ \left\Vert \thetacentral - \thetav \right\Vert_2^2 \right] = \frac{kr^2}{\left(m-k-1\right)}\frac{m}{24} 2^{-2B} \frac{2^{B+1}-2^b+1}{2^B+2^b-1}
\end{align*}
\end{theorem}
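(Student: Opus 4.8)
The plan is to use the linearity of the oracle reconstruction to reduce the squared error to a quadratic form in the quantization noise, and then to average that form first over the noise (under the standard high-rate model) and then over the random Gaussian sensing matrix by means of an inverse-Wishart moment.

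First I would observe that, since $\Psi$ is orthogonal and $\Phi$ has i.i.d.\ Gaussian entries, the restriction $A_{\mathcal{S}}$ of $A=\Phi\Psi$ to the support $\mathcal{S}$ of $\thetav$ is an $m\times k$ matrix with i.i.d.\ $\mathcal{N}(0,1/m)$ entries, of full column rank almost surely. Writing the quantized measurement vector as $\y+\mathbf{n}$ with $\mathbf{n}$ the quantization error, the oracle (as recalled after Theorem~\ref{oracle_thm}) computes $A_{\mathcal{S}}^{\dagger}(\y+\mathbf{n})$; since $A_{\mathcal{S}}^{\dagger}A$ acts as the identity on vectors supported in $\mathcal{S}$, the reconstruction error equals $A_{\mathcal{S}}^{\dagger}\mathbf{n}$, so in both the side and the central case $\big\Vert\hat{\thetav}-\thetav\big\Vert_2^2=\mathbf{n}^{T}(A_{\mathcal{S}}^{\dagger})^{T}A_{\mathcal{S}}^{\dagger}\mathbf{n}$, the two cases differing only through which quantizer generates $\mathbf{n}$. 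In the high-rate regime I adopt the usual model in which $\mathbf{n}$ is zero-mean with uncorrelated entries, independent of $A$, the $j$-th having variance $\sigma_j^2=w_j^2/12$ with $w_j$ the width of the quantization cell containing the $j$-th measurement. Conditioning on $A$ and averaging over $\mathbf{n}$ then gives $\mathbb{E}_{\mathbf{n}}\big\Vert\hat{\thetav}-\thetav\big\Vert_2^2=\sum_{j=1}^{m}\sigma_j^2\big[(A_{\mathcal{S}}^{\dagger})^{T}A_{\mathcal{S}}^{\dagger}\big]_{jj}$.

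Next I would handle the average over $A$. Because the rows of $A_{\mathcal{S}}$ are i.i.d., permuting them replaces $(A_{\mathcal{S}}^{\dagger})^{T}A_{\mathcal{S}}^{\dagger}$ by its conjugation by a permutation matrix without changing its law, so the diagonal entries $\mathbb{E}_A\big[(A_{\mathcal{S}}^{\dagger})^{T}A_{\mathcal{S}}^{\dagger}\big]_{jj}$ are all equal, hence each equals $\tfrac{1}{m}\mathbb{E}_A\,\mathrm{tr}\big[(A_{\mathcal{S}}^{\dagger})^{T}A_{\mathcal{S}}^{\dagger}\big]=\tfrac{1}{m}\mathbb{E}_A\,\mathrm{tr}\big[(A_{\mathcal{S}}^{T}A_{\mathcal{S}})^{-1}\big]$, using $A_{\mathcal{S}}^{\dagger}=(A_{\mathcal{S}}^{T}A_{\mathcal{S}})^{-1}A_{\mathcal{S}}^{T}$ and the cyclic invariance of the trace. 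Now $m\,A_{\mathcal{S}}^{T}A_{\mathcal{S}}$ is a Wishart matrix with $m$ degrees of freedom and identity scale matrix, whose inverse has mean $\frac{1}{m-k-1}I_k$; therefore $\mathbb{E}_A\,\mathrm{tr}\big[(A_{\mathcal{S}}^{T}A_{\mathcal{S}})^{-1}\big]=\frac{mk}{m-k-1}$ (valid for $m>k+1$, which the formula presupposes). Combining the two averages, $\mathbb{E}\big\Vert\hat{\thetav}-\thetav\big\Vert_2^2=\frac{k}{m-k-1}\sum_{j=1}^{m}\sigma_j^2$, i.e.\ the distortion is $\frac{k}{m-k-1}$ times the total quantization-noise energy, and it only remains to evaluate that energy in each case.

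For the side decoder with balanced descriptions, $m/2$ measurements are quantized with step $\Delta_B$ and $m/2$ with step $\Delta_b$, so $\sum_j\sigma_j^2=\frac{m}{2}\big(\tfrac{\Delta_B^2}{12}+\tfrac{\Delta_b^2}{12}\big)=\tfrac{m}{24}(\Delta_B^2+\Delta_b^2)$; substituting $\Delta_B=r2^{-B}$ and $\Delta_b=r2^{-b}$ yields the claimed $\mathbb{E}\big\Vert\thetaside-\thetav\big\Vert_2^2$. For the central decoder every measurement is dequantized on the combined partition, which by the construction of Section~\ref{cs_gq} contains $2^{B}-2^{b}+1$ cells of width $\Delta_B$ and $2(2^{b}-1)$ cells of width $\Delta_B/2$, i.e.\ $2^{B}+2^{b}-1$ cells in all; averaging $w_j^2/12$ over these cells gives the common variance $\varepsilon_{\mathsf{C}}^2=\tfrac{\Delta_B^2}{24}\cdot\tfrac{2^{B+1}-2^{b}+1}{2^{B}+2^{b}-1}$, so that $\sum_j\sigma_j^2=m\,\varepsilon_{\mathsf{C}}^2$ and, again using $\Delta_B=r2^{-B}$, one obtains the claimed $\mathbb{E}\big\Vert\thetacentral-\thetav\big\Vert_2^2$. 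I expect the main obstacle to be the random-matrix step — recognizing that the quantity to evaluate is $\mathbb{E}\,\mathrm{tr}(A_{\mathcal{S}}^{T}A_{\mathcal{S}})^{-1}$ and computing it through the inverse-Wishart mean (this is also where the Gaussianity of $\Phi$, rather than mere RIP, is essential); everything else is the standard additive, white, high-rate quantization approximation — which is why the statement is an estimate valid in that regime — together with elementary counting of the cell widths.
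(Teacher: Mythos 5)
Your proposal is correct and follows essentially the same route as the paper: reduce the oracle reconstruction error to $\frac{k}{m-k-1}$ times the expected quantization-noise energy, then evaluate that energy under the high-rate model for the two side quantizers and for the combined staggered central partition (with its $2^B-2^b+1$ cells of width $\Delta_B$ and $2(2^b-1)$ cells of width $\Delta_B/2$). The only difference is that the paper obtains the factor $\frac{k}{m-k-1}$ by directly invoking \cite[Theorem 5]{ColucciaRD}, whereas you re-derive it from scratch via exchangeability of the rows of $A_{\mathcal{S}}$ and the inverse-Wishart mean, and you specialize the side-distortion bookkeeping to balanced descriptions while the paper averages over the two (possibly unbalanced) descriptions using $\mB{1}=\mb{2}$ and $\mb{1}=\mB{2}$ to reach the same formula.
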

\begin{proof}
According to \cite[Theorem 5]{ColucciaRD}, the following relation holds:
\begin{align}
\label{oracle_coluccia}
\mathbb{E}\left[ \Vert \hat{\thetav} - \thetav \Vert_2^2 \right] = \frac{k}{m-k-1}\mathbb{E}\left[ \mathbf{e}^T\mathbf{e} \right]
\end{align}
where $\mathbf{e} = \mathcal{Q}(\mathbf{y}) - \mathbf{y}$ is the error introduced by quantization $\mathcal{Q}(\cdot)$ 
In case of side decoding, the receiver has only a single description $i$ with \mB{i}\ high resolution measurements and \mb{i}\ low resolution measurements. The average error norm is obtained as the mean of the norm of the expected error when description 1 is received ($\mathbf{e}_1$) and when description 2 is received  ($\mathbf{e}_2$). Notice that $\mB{1} = \mb{2}$ and $\mb{1} = \mB{2}$ by design. Hence, in the high-rate regime the following relation holds:
\begin{align*}
\mathbb{E}\left[ \left\Vert \thetaside - \thetav \right\Vert_2^2 \right] &= \frac{k}{m-k-1}\cdot\frac{1}{2}\left(\mathbb{E}\left[\mathbf{e}_1^{T}\mathbf{e}_1 \right]+\mathbb{E}\left[\mathbf{e}_2^{T}\mathbf{e}_2 \right]\right)\\
&= \frac{kr^2}{m-k-1}\frac{1}{24} \Big( \mB{1} 2^{-2B} + \mb{1} 2^{-2b} \\&+ \mB{2} 2^{-2B} + \mb{2} 2^{-2b} \Big)\\
&= \frac{kr^2}{\left(m-k-1\right)}\frac{m}{24}\left( 2^{-2B} + 2^{-2b} \right)
\end{align*}
The central decoder exploits the staggering of the low and high resolution quantizers to obtain a non-uniform central quantizer, having $2^B-2^b+1$ bins of size $\Delta_B$ and $2(2^b-1)$ bins of size $\frac{\Delta_B}{2}$ as explained in Sec. \ref{cs_gq}. Thus the expected error norm is:
\begin{align}
&\mathbb{E}\left[ \mathbf{e}^T\mathbf{e} \right] =\nonumber\\ &= \frac{m}{2^B+2^b-1} \left[ 2\left( 2^b-1 \right) \frac{\Delta_B^2}{48} + \left( 2^B-2^b+1 \right) \frac{\Delta_B^2}{12} \right]\nonumber\\& = \frac{mr^2}{24}2^{-2B} \frac{2^{B+1}-2^b+1}{2^B+2^b-1}
\label{eq:central_quant}
\end{align}
Substituting it into \eqref{oracle_coluccia} we obtain the desired expression.
\end{proof}

\section{Rate-distortion optimization and packetization}
\label{sec:rd_opt}

\subsection{Optimizing CS-GQ}
\label{GQ_opt}

In this section, we focus our analysis on the scenario when one employs identical channel models for both descriptions, \emph{e.g.}, the loss probability $p$ is the same for both descriptions. Under this scenario, balanced descriptions are optimal both instance-wise and on average, \emph{i.e.} the distortion incurred by either side decoder is the same. On the other hand, unbalanced descriptions provide the same average performance, but the distortion incurred by a specific instance depends on which description is received and can be either lower or higher than the distortion in the balanced case. Hereafter, we thus only consider the balanced case.
Different values of the parameters $B$ and $b$ provide different tradeoffs between the distortion at the central decoder and the distortion at the side decoder, for a fixed total rate for both descriptions: $R=B+b$. Moreover, the packet loss rate on the communication links will also affect the expected distortion.
On the other hand, the limit case $b=0$, which we called CS-SPLIT, simply splits the measurements into two sets without inserting any redundancy. This is equivalent to the segmentation performed by any network protocol when the packet size exceeds the maximum size. It is clear that this is the optimal strategy when there are no packet losses, because it is equivalent to generating a single description and no bits are wasted in redundancy. However, if one description fails to reach the receiver, only half of the measurements will be available for decoding, hence significantly degrading the quality or leading to recovery failure when their number is too low. Fig. \ref{tradeoff} shows the various operating points enabled by CS-GQ on the side distortion vs. central distortion plane.

\begin{figure*}
\centerline{
\subfloat[$m=80$]{
\includegraphics[width=0.43\textwidth]{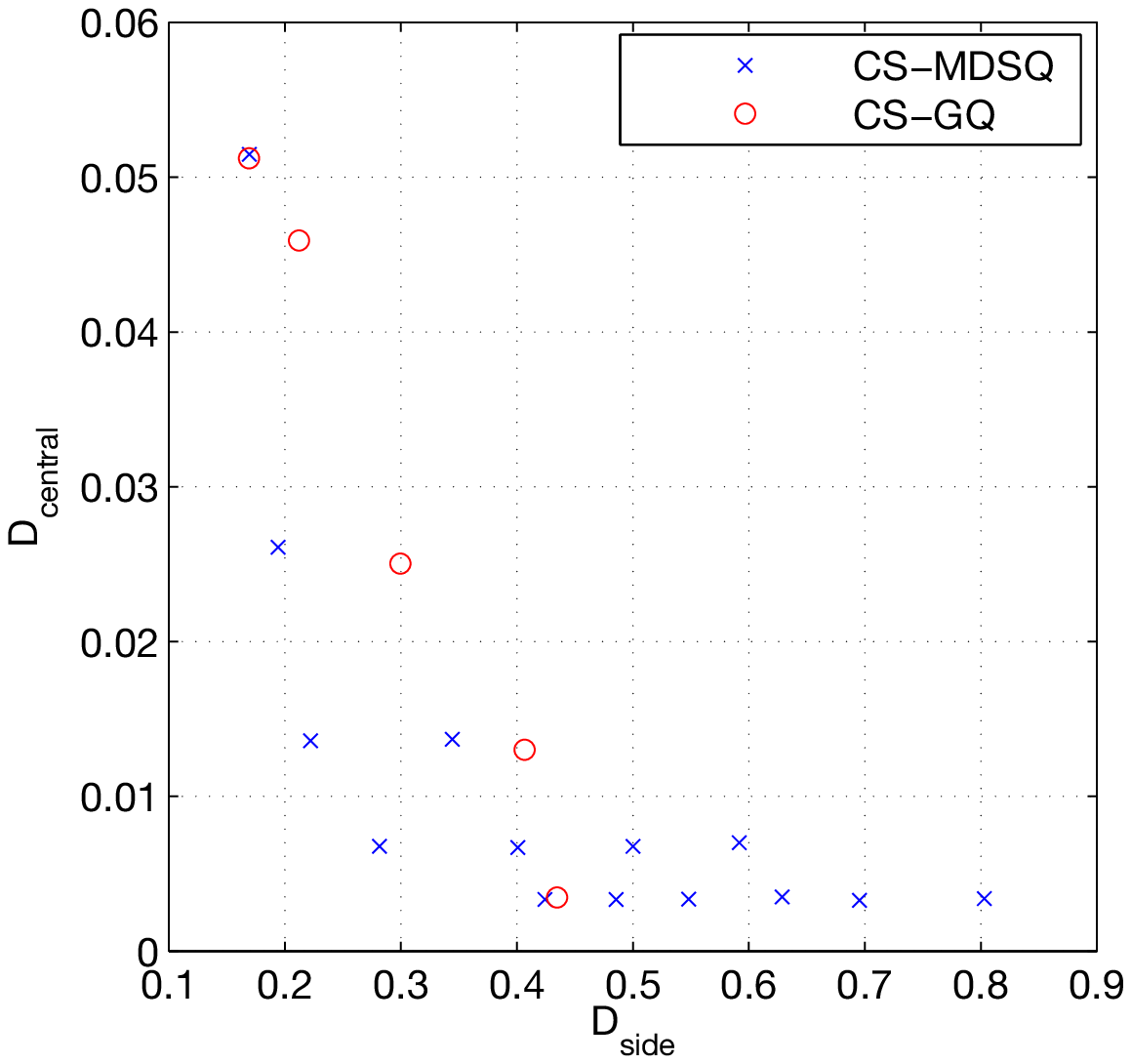}}\qquad
\subfloat[$m=120$]{
\includegraphics[width=0.43\textwidth]{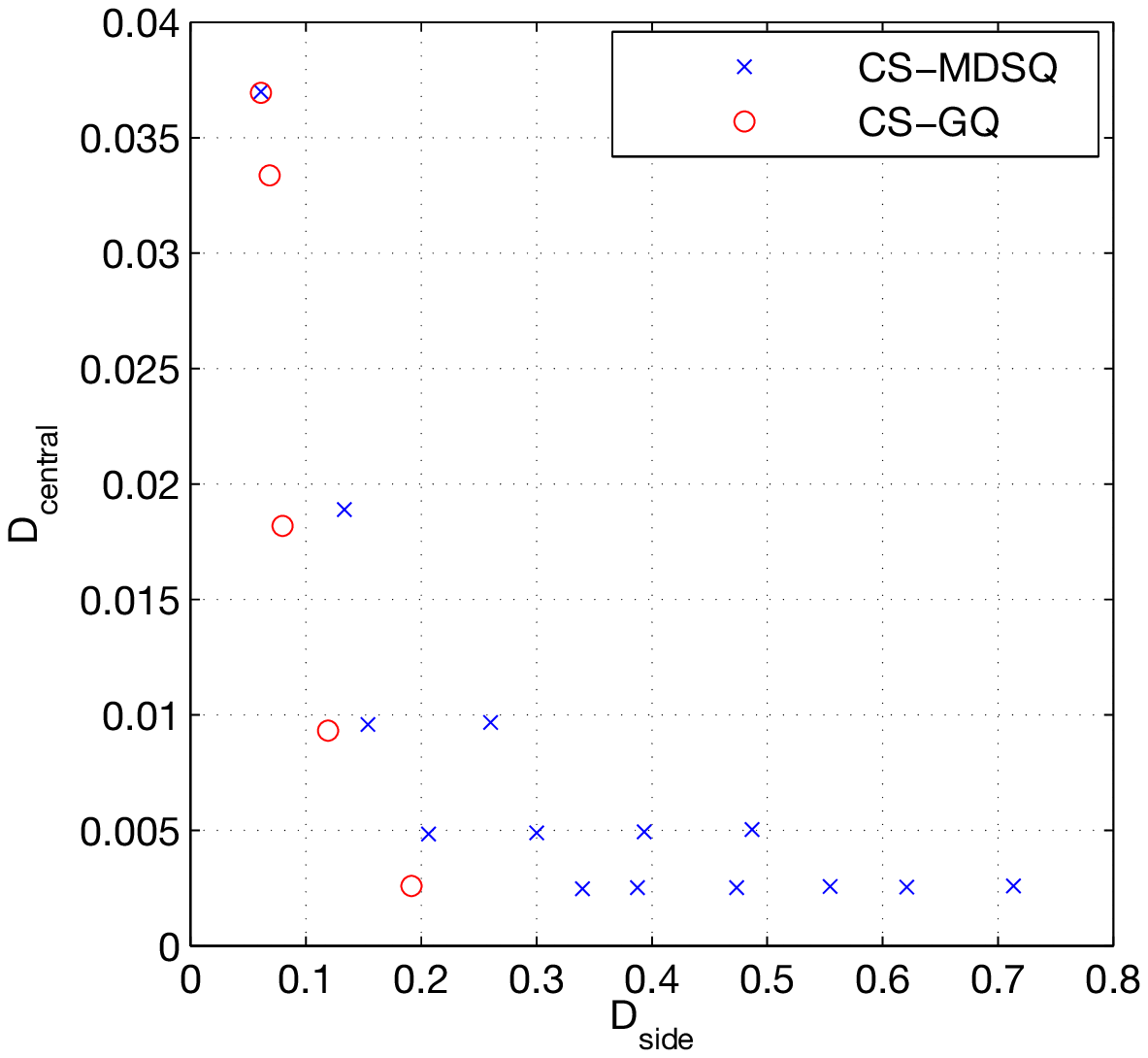}}}
\caption{Operating points for CS-GQ (different values of $B$ and $b$) and for CS-MDSQ (different values of $M$, refer to \cite{mdsq2011}). $n=256$, $k=10$.}
\label{tradeoff}
\end{figure*}

It is of interest to study the optimal value of the parameters $B$ and $b$ when the description loss probability is known. In this scenario, one wishes to find the quantization step sizes providing the lowest expected distortion for a given bit budget. Hence we define the average distortion as
\begin{align*}
D = p^2 + 2p\left(1-p\right) \cdot D_s + \left(1-p\right)^2 \cdot D_c
\label{Davg}
\end{align*}
where $p$ is the probability of losing a description, and $D_s$ and $D_c$ are the distortions incurred by the side and central decoder, respectively defined as
$$
D_s = \frac{\Vert \thetav-\thetaside \Vert_2}{\Vert \thetav \Vert_2}\ \text{and}\ D_c = \frac{\Vert \thetav-\thetacentral \Vert_2}{\Vert \thetav \Vert_2}~.
$$

The main problem in the optimization of the parameters is the lack of closed-form expressions for the distortion of the central and side decoders. There are two possible ways of solving this problem. The first way is to resort to an operational curve, in which the operating points on the ($D_s$,$D_c$) plane are known in advance (\emph{e.g.}, as a result of experiments).

In this case, for a fixed total rate $R$, every choice of the low-resolution rate $b$ generates a $(D_s(b),D_c(b))$ point, for a total of $\lfloor \frac{R}{2} \rfloor+1$ points. The optimal choice of $b$ is therefore:
\begin{align}
\hat{b} = \underset{b\in\left[0,\frac{R}{2}\right]}{\arg\min}\ \left[ p^2+2p\cdot(1-p)D_s(b) + (1-p)^2 \cdot D_c(b) \right] \tag{Operational Optimality}
\end{align}

The other possible method is to employ some bounds on the reconstruction performance. We can use the performance of the oracle decoder, as described in Theorem \ref{oracle_thm}, to derive a simple expression that does not require any operational information and can be used to optimize the parameters \emph{a priori}. Hence, the minimization of the average distortion becomes:
\begin{align*}
\hat{b} = \underset{b\in\left[0,\frac{R}{2}\right]}{\arg\min} &\left[ p^2 + 2p\left(1-p\right) \cdot D_s + \left(1-p\right)^2 \cdot D_c \right] \\
= \underset{b\in\left[0,\frac{R}{2}\right]}{\arg\min} &\Bigg[ 2p\left(1-p\right) \frac{kr^2}{\left(m-k-1\right)}\frac{m}{24}\left( 2^{-2B} + 2^{-2b} \right) \\&+ \left( 1-p \right)^2 \frac{kr^2}{\left(m-k-1\right)}\frac{m}{24} 2^{-2B} \frac{2^{B+1}-2^b+1}{2^B+2^b-1}  \Bigg] \\
= \underset{b\in\left[0,\frac{R}{2}\right]}{\arg\min} &\Bigg[ 2p \left( 2^{-2B} + 2^{-2b} \right) \\&+ (1-p)2^{-2B} \frac{2^{B+1}-2^b+1}{2^B+2^b-1} \Bigg] \\
= \underset{b\in\left[0,\frac{R}{2}\right]}{\arg\min} &\Bigg[ 2p \left( 2^{-2(R-b)} + 2^{-2b} \right) \\&+ (1-p)2^{-2(R-b)} \frac{2^{R-b+1}-2^b+1}{2^{R-b}+2^b-1} \Bigg]
\tag{Oracle Optimality}
\end{align*}

Exhaustive search over the feasible integer values of $b$ can be employed to determine the optimal one. We remark that this search just involves evaluating the distortion expression for $\lfloor \frac{R}{2} \rfloor+1$ values of $b$ (\emph{e.g.}, 5 values when $R=8$), thus having very low complexity.
Unfortunately, the optimality would only hold if we had an oracle decoder, so this choice might be suboptimal. Section \ref{optimal_simulations} shows a comparison between the average distortion obtained with operational optimization and with oracle optimization in order to show the effectiveness of the latter.

\subsection{Packetization for high number of measurements}
\label{gq_large_scale}

\begin{figure}
\centering
\subfloat[Wrong way]{\includegraphics[width=\columnwidth]{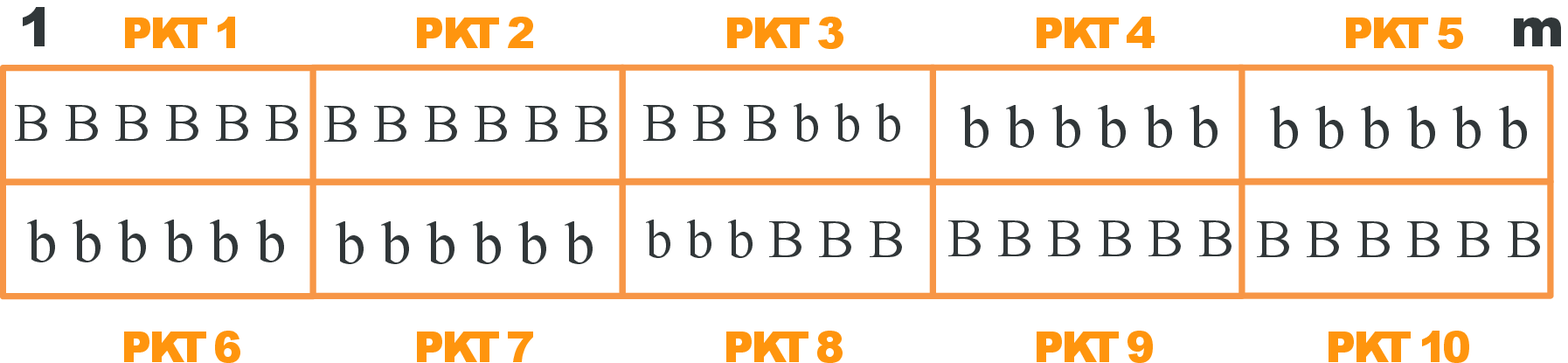}\label{GQ_largewrong}}\\
\subfloat[Right way]{\includegraphics[width=\columnwidth]{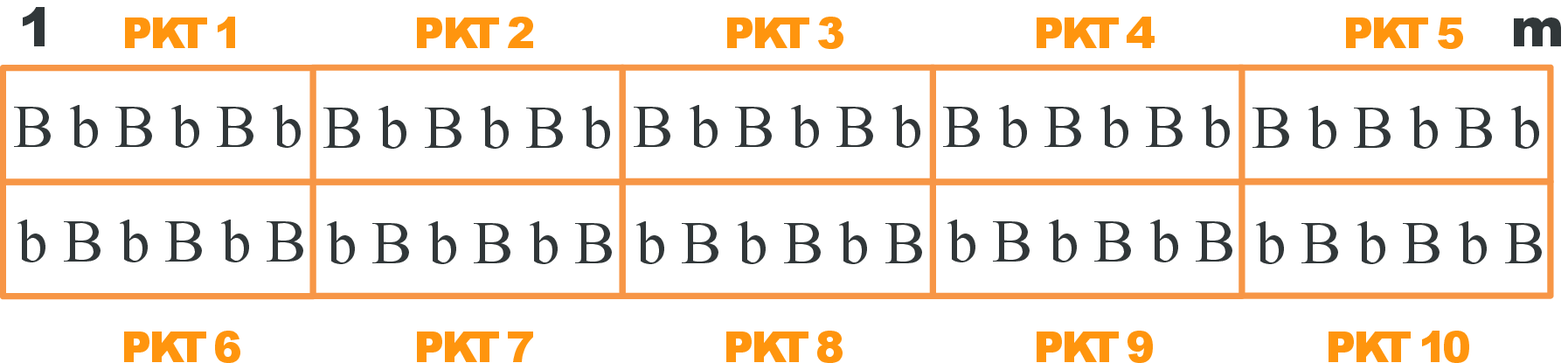}\label{GQ_largescale}}
\caption{Segmenting a long measurements vector.}
\end{figure}

It may happen that the MDC system needs to deal with a large scale problem in which the number of measurements to be acquired is fairly large. In this scenario, the transmitter cannot create packets that are arbitrarly large as their size is capped by the MTU (Maximum Transmission Unit) size of the adopted communication protocol. As an example, the IEEE 802.15.4 protocol, popular in sensor network applications, specifies an MTU equal to 127 bytes with an effective maximum payload size of 104 bytes.

As far as CS-GQ is concerned, a careful segmentation operation must be performed whenever the size of a description exceeds the MTU. The goal is to keep the packets balanced, in order to always get useful information whenever a packet is received. Fig. \ref{GQ_largescale} graphically shows a way to assign quantization step sizes and performing segmentation. Every packet contains an alternating sequence of measurements quantized with $B$ bits and $b$ bits and has a dual packet with the opposite alternation pattern. Therefore, every received packet is informative, in the sense that it is able to improve the current set of measurements, whether by adding new measurements or by improving their accuracy. To fix ideas, Fig. \ref{GQ_largewrong} shows a wrong way to perform segmentation, where the packets containing only low-resolution measurements are not informative if the high resolution dual set is already available. In case of loss of a packet, one could potentially lose more high resolution measurements with respect to the case depicted in Fig. \ref{GQ_largescale}.

\section{Numerical experiments and applications}
\label{numerical}
Numerical simulations have been performed to evaluate the performance of the proposed technique against other methods to implement MDC for CS and to show the advantages for some practical usage scenarios, inspired by sensor network applications. All the results requiring side decoding have been obtained using the side decoder presented in Sec. \ref{admm}. We remark that the algorithm solves problem \eqref{sideGQ} and that the solution returned by the proposed algorithm is the same as the one returned by a solver of convex problems such as CVX. However, the proposed algorithm is specialized for the particular problem considered in this paper and can be significantly faster than CVX. The SeDuMi solver employed by CVX uses second-order methods, such as Newton's method, and the complexity per iteration grows as $\mathcal{O}(n^6)$. ADMM on the other hand is a first order method, thus much less expensive per iteration than a second order method, although in principle having slower convergence. Nevertheless, ADMM enjoys linear convergence rates under mild conditions on the cost function, as shown in \cite{hong2012linear}. As a practical example, solving a problem with $n=1024$, $k=150$ non-zero components, $m=600$ measurements obtained by a Gaussian sensing matrix, $B=6$ bits, $b=4$ bits requires 104 seconds using CVX, while only 9.1 seconds are needed by the proposed algorithm based on ADMM, thus achieving a tenfold speed-up.
Moreover, whenever quantization cells are uniform, we add to the reconstruction problem an $\ell_\infty$ constraint in addition to the $\ell_2$ constraint presented in \eqref{eq:CS_recovery_relaxed}. This is to enforce consistent reconstruction as discussed in Sec.~\ref{cs_gq}.

\subsection{Effectiveness of optimization via oracle formula}
\label{optimal_simulations}
We present an experiment that shows the average distortion obtained via optimization of the parameters with the operational method and with the oracle method discussed in Sec.\ref{GQ_opt}. The curves shown in Fig. \ref{optdistortion} are obtained for the same system parameters of Fig. \ref{tradeoff}, \emph{i.e.}, $m=120$ measurements for a $n=256$-long, $k=10$-sparse signal and a memoryless channel with packet loss probability $p$. It is observed that the optimization via the oracle method provides a good estimate of the parameters, only yielding slightly higher distortion when the packet loss probability is very small. This is due to the ideality of oracle decoding, which estimates lower distortions than actually achieved by practical algorithms such as $l_1$ minimization.

\begin{figure}
\centerline{\includegraphics[width=.9\columnwidth]{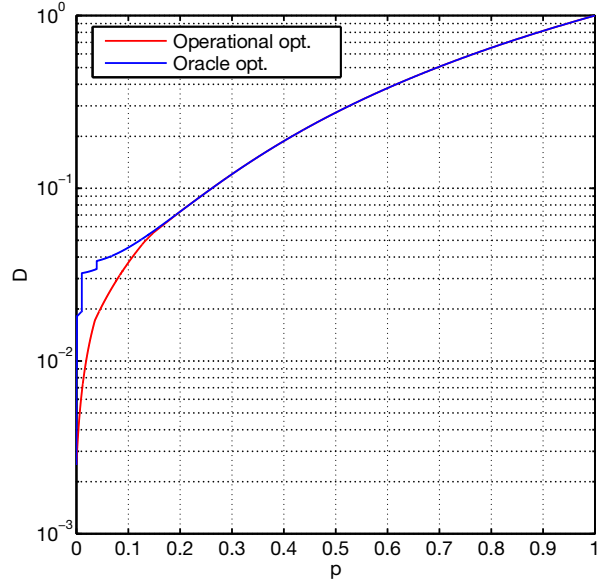}}
\caption{Average distortion over a memoryless lossy channel with packet loss probability $p$. Parameters $b$ and $B$ are optimized via oracle/operational formulas. $n=256$, $m=120$, $k=10$.}
\label{optdistortion}
\end{figure}

\subsection{Comparison with other quantization-based MDC schemes}
In this section we compare the performance of CS-GQ against other methods generating multiple descriptions using quantization. Thus, we compare CS-GQ, CS-SPLIT and CS-MDSQ, where CS-MDSQ is an application of the multiple description scalar quantizer (MDSQ) as introduced in \cite{mdsq} to the quantization of compressive measurements. Although MDSQ is a powerful and popular solution for MDC, and provides asymptotic performance close to the rate-distortion bound, its use in CS is not straightforward. In fact, CS reconstruction is a nonlinear process, with no guarantee that lower distortion on the measurements results in better reconstruction quality. Hence, the choice of the number of measurements and quantization step size is a trade-off in order to achieve the best performance after recovery. In our simulations, we use the index assignment technique developed in \cite{mdsq2011}. As a first comparison we fix the number of measurements and obtain the tradeoff points on the side/central distortion plane for both CS-GQ and CS-MDSQ. In Fig. \ref{tradeoff} it is observed that fixing $m=120$ for a $n=256$-long, $k=10$-sparse signal allows CS-GQ to provide better tradeoff points than CS-MDSQ. However, CS-MDSQ outperfroms CS-GQ when a lower number of measurements is chosen. We remark that, theoretically, CS-MDSQ allows $2^{R+1}$ tradeoff points by modulating the redundancy, while CS-GQ admits $\left\lfloor \frac{R}{2} \right\rfloor +1$ points only. However, observing Fig.~\ref{tradeoff} we can notice that the redundancy of many tradeoff points of CS-MDSQ is so high that the resulting side distortion is severe (\emph{e.g.}, above 0.4). We remark that the MDSQ can also accomodate the zero-redundancy case when all the entries of the index assignment matrix are used, corresponding to CS-SPLIT. Both zero-redundancy CS-MDSQ and CS-SPLIT obtain the same central distortion. However, side distortion may be different. CS-SPLIT represents half of the measurements with a fine quantization rate, while CS-MDSQ represents all the measurements with a coarser quantization rate. The effect on CS recovery is that when there are plenty of measurements (depending on $n$ and on the sparsity of the signal), the side distortion after recovery is quantization-limited, so even though CS-MDSQ provides more measurements, their coarser quantization limits the performance. We also notice that the full-redundancy case ($B=b$) corresponds to a CS-MDSQ with staggered side quantizers and equal size intervals in the central partition.

A further experiment evaluates the dependency of the side and central distortion on the number of measurements. Hence, we fix the number of measurements and an operating point and we analyse how central and side distortion change as a function of the number of measurements. Fig. \ref{gq_vs_theworld} is obtained with the same signal as in Fig. \ref{tradeoff} and choosing $m=120$, $(B,b)=(6,2)$ as the operating point of CS-GQ and $M=2$ as the redundancy of the MDSQ with $R=8$ total rate (namely, the ratio between the step size of the side quantizer and of the central quantizer is $2M$; refer to \cite{mdsq2011} for further details). We observe that the distortion of CS-GQ decreases faster than the distortion of CS-MDSQ as the number of measurement increases, while central distortion decreases at the same rate although CS-GQ is marginally better. The reason behind this performance is that CS-MDSQ is more quantization-limited than CS-GQ when the number of measurements is high. Viceversa, for a very low number of measurements, CS-GQ is measurement-limited.
The previous experiment considered a rather generous overall budget equal to $Rm=960$ bits. Indeed, we considered a fixed number of measurements and chose the rate according to the budget. It must be noticed that one could optimize both the value of $m$ and the value of $R$ under the overall budget constraint and this could lead to different choices for CS-GQ and CS-MDSQ. Although we do not report the results for brevity, we observed that if the budget is very large as in the previous case, CS-MDSQ generally provides better tradeoff curves than CS-GQ by using fewer measurements and a finer quantization rate. However, results change under a tight budget constraint. A minimum number of measurements has to be acquired in order to ensure successful reconstruction and due to the tight budget constraint it is not possible to use high quantization rates. Fig.\ref{tight_budget} compares the tradeoff curves of CS-GQ and CS-MDSQ for the two operating points $m=50$ and $R=8$ and for $m=100$, $R=4$. Those two choices of $m$ and $R$ are optimal for CS-MDSQ and CS-GQ respectively under a budget of $Rm=400$ bits. We notice that CS-GQ outperforms CS-MDSQ despite choosing the best combination of $m$ and $R$. We therefore conclude that CS-GQ can be advantageous with respect to CS-MDSQ when the bit budget is low. 

\begin{figure}
\centering
\subfloat[Side Distortion]{\includegraphics[width=0.94\columnwidth]{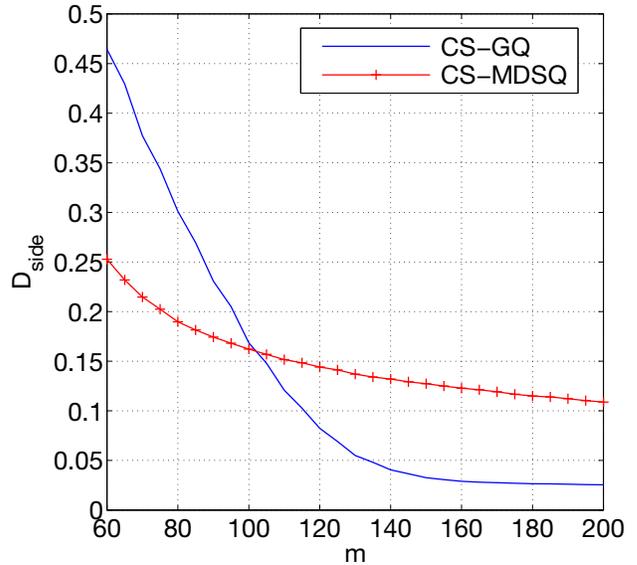}\label{gq_vs_theworld_side}}\\
\subfloat[Central Distortion]{\includegraphics[width=0.94\columnwidth]{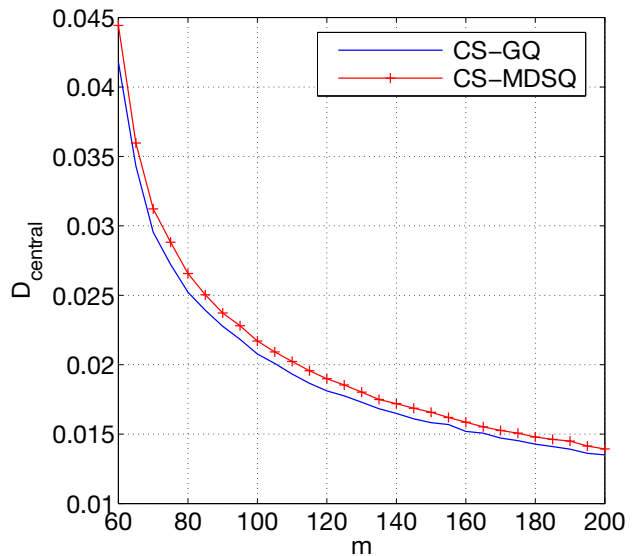}\label{gq_vs_theworld_central}}
\caption{Comparison of CS-GQ and CS-MDSQ. $(B,b)=(6,2)$ for CS-GQ and $M=2,\ R=8$ for CS-MDSQ. $n=256$, $k=10$.}
\label{gq_vs_theworld}
\end{figure}

\begin{figure}
\centering
\includegraphics[width=0.44 \textwidth]{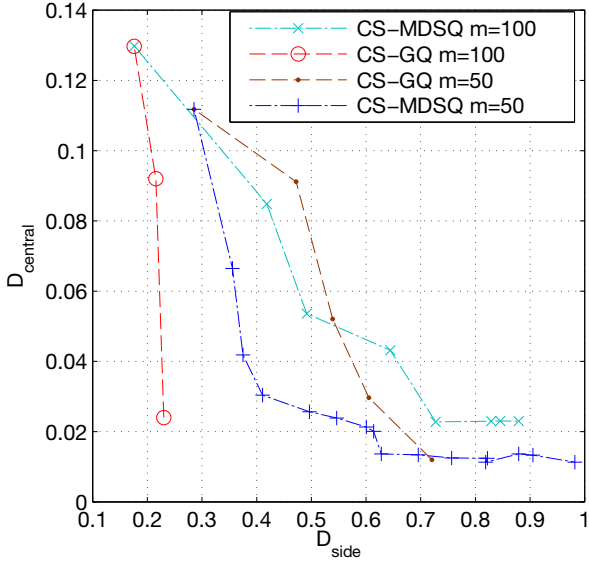}
\caption{Comparison under a tight budget $Rm=400$ bits. Optimal choice for CS-MDSQ is $m=50$,$R=8$. Optimal choice for CS-GQ is $m=100$,$R=4$.}
\label{tight_budget}
\end{figure}

\subsection{Simulations over MTU-limited memoryless and Gilbert channels}
In this section we perform some simulations to assess the gain achieved by using CS-GQ when the communication channel is prone to packet loss. We consider two channel models, which are significant for the performance assessment of CS-GQ in a packetization scenario: the memoryless channel, in which the loss trace is a Bernoulli process, and the Gilbert channel \cite{gilbert,Elliott1963}, where memory is modelled as a two-state Markov chain, as shown in Fig.~\ref{gilbert_markov}. In this model the communication link can be in either of two states, that we label Good (G) and Bad (B) with a probability $p$ of moving from G to B and probability $q$ of moving from B to G. When in B state the link will drop any packet transmitted. This is a popular model for channels exhibiting burst errors, \emph{e.g.}, due to fading in a wireless system. In both cases we assume that a limit to the maximum packet size is present and it is equal to 100 bytes, as in the MTU of IEEE 802.15.4.

\begin{figure}
\centerline{\includegraphics[width=.9\columnwidth]{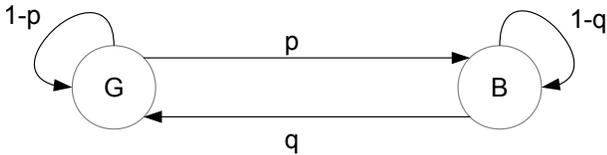}}
\caption{The Gilbert-Elliot channel model. $p$ is the transition probability from good to bad state.}
\label{gilbert_markov}
\end{figure}

The simulations over the memoryless channel have been performed with a signal of length $n=256$, $k=40$ non-zero components, $m=160$ measurements and a total rate $R=10$ bits per measurement over the two descriptions. CS-GQ has been implemented to create two packets of 100 bytes each, in the same fashion of Fig.~\ref{GQ_largescale}, and compared against a standard segmentation of the measurements, \emph{i.e.}, splitting half of the measurements into the two packets of 100 bytes. The tests measured the distortion in the reconstructed signal averaged over $10^5$ runs. The values of $B$ and $b$ are automatically optimized using the oracle method (see Sec. \ref{GQ_opt}). Fig.~\ref{iid} shows that CS-GQ has superior performance, yielding a lower average distortion than segmentation of the measurements and the gain is more significant when the packet loss probability is high.

\begin{figure}
\centerline{\includegraphics[width=.9\columnwidth]{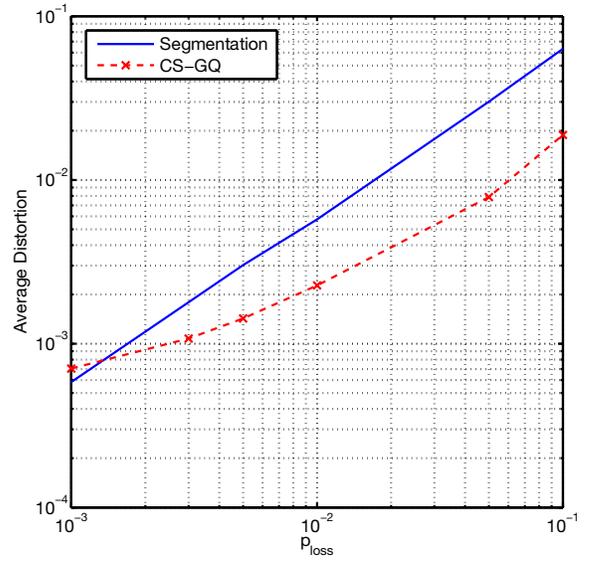}}
\caption{Simulations over memoryless channel.}
\label{iid}
\end{figure}

The simulations over the Gilbert channel have been performed using a longer signal, in order to correctly test the effect of the memory of the channel. We suppose that a batch of $N_s=1000$ vectors of measurements have to be transmitted in sequence. The length of the signals is $n=1000$, $k=200$ non-zero components, $m=720$ measurements and a total rate $R=10$ bits per measurement. Table \ref{gilbert_table} reports the average distortion for some values of the transition probabilities $p$ and $q$. It can be noticed that the proposed scheme allows to achieve lower distortion with respect to the segmentation approach.

\begin{table}
\caption{Average distortion over the Gilbert-Elliot channel}
\centerline{
\begin{tabular}{ c c c c }
\hline 
$p$ & $q$ & \bf{Segmentation} & \bf{CS-GQ}\tabularnewline
\hline 
\hline 
0.05 & 0.5 & 0.0257 & \bf{0.0192}\tabularnewline
\hline 
0.05 & 0.3 & 0.0762 & \bf{0.0496}\tabularnewline
\hline 
0.05 & 0.15 & 0.1820 & \bf{0.1205}\tabularnewline
\hline 
0.01 & 0.3 & 0.0160 & \bf{0.0102}\tabularnewline
\hline 
0.01 & 0.15 & 0.0451 & \bf{0.0324}\tabularnewline
\hline 
\end{tabular}
}
\label{gilbert_table}
\end{table}

\subsection{Simulations of object recognition over lossy channel}
In this section we consider an application recently proposed in \cite{YangDistributedObjectRec} and \cite{YangMultiviewObjectRec} as a possible scenario in which the techniques proposed in this paper could provide a significant performance improvement. The purpose is to show the effectiveness of CS-GQ in a practical scenario. In this application, a smart camera platform, such as CITRIC \cite{CITRIC}, computes histograms of image descriptors to be used as features in a scene classification task. Image descriptors are compact representations of an image that provide a certain degree of invariance to transformations such as rotation, scaling, etc., and are widely used in the computer vision field for visual search or scene recognition tasks. The most popular descriptors are the Scale-Invariant Feature Transform (SIFT) \cite{loweSIFT}, which describe each image keypoint, \emph{i.e.}, a signficant point of the image to be included in the descriptor, through a vector of $s=128$ entries. Each image $i$ has a variable number $N_i$ of keypoints, and thus associated descriptors. A clustering algorithm, such as $\kappa$-means, is used to identify $\kappa$ clusters in $\mathbb{R}^s$ from the $N_i$ original descriptors. A histogram of descriptors is obtained by counting how many keypoint descriptors fall in each cluster. In \cite{YangDistributedObjectRec}, the authors leverage the sparsity of such histograms to compress them by means of random projections, and transmit them to a remote fusion center. The authors do not consider the problems of quantization of the measurements, nor the possibility of having channel losses. However, we remark that those are key problems for this application. In fact, transmitting the random projections as floating point values requires very large bandwidth, hence quantization could reduce such requirement and save a sizeable amount of the scarce energy of the mobile platform. Also, channel losses may occur and retransmission of packets may not be feasible due to real-time constraints, and it would indeed require further energy consumption on sensor side. Hence disabling retransmission and providing a scheme that is robust to channel losses could be of interest for such platforms. 

In our tests, we build a $10,000$-dimensional dictionary by running hierarchical $\kappa$-means, with 10 clusters per level in the hierarchy, over SIFT descriptors  extracted from some images in the BMW database \cite{YangDistributedObjectRec}. This database comprises photos of the UC Berkeley campus acquired with the CITRIC platform with multiple views of the same scene from different angles. We then use the dictionary of quantized SIFT features to produce histograms of descriptors. Each image is characterized by a histogram of descriptors, which is sparse because the number of descriptors extracted from each image is much less than the number of entries in the dictionary. \cite{YangDistributedObjectRec}. A support vector machine is trained from histograms of descriptors of the images in the training set, in order to solve a classification task that amounts to recognizing a scene from a number of possible scenes. In order to simulate a real system, we consider  packets with maximum size of 100 bytes (as in the case of the MTU of IEEE 802.15.4), to be transmitted over a memoryless channel. The test metric that we adopt in this case is the accuracy in the scene recognition. Table \ref{tableaccuracy} shows the accuracy in the scene recognition task as function of the packet loss probability $p$. We can see that CS-GQ builds robustness into the system and allows to achieve higher recognition rates, when in presence of packet losses.

\begin{table}
\caption{Accuracy of scene recognition}
\label{tableaccuracy}
\centerline{
\begin{tabular}{ c c c c }
\hline 
$p$ & \bf{Segmentation} & \bf{CS-GQ}\tabularnewline
\hline 
\hline 
0.01 & 70.63\% & \bf{71.25\%}\tabularnewline
\hline 
0.05 & 61.25\% & \bf{63.75\%}\tabularnewline
\hline 
0.10 & 54.38\% & \bf{61.88\%}\tabularnewline
\hline
\end{tabular}
}
\end{table}

\section{Conclusions}
In this paper we considered the problem of robust transmission of CS measurements over unreliable channels. We discussed some strategies based on the Graded Quantization paradigm that enable to increase robustness in a way that directly exploits the democratic property of compressive measurements. Moreover, contrary to many cases in the literature, we considered a realistic packetization scenario and discussed the issues involved in it as well as the performance of the proposed method. Finally, we proposed a novel reconstruction algorithm, based on ADMM, for the specific problem of having measurements obtained from multiple quantizers. Our simulations showed that the proposed technique is a competitive method to implement MDC for CS applications. In particular, we showed that we can improve scene recognition accuracy in an application of a smart camera platform using CS and transmitting over an unreliable channel.  


\end{document}